\keywords{polynomial isomorphism, algebra isomorphism, tensor isomorphism, 
completeness, average-case algorithms}
\theoremstyle{plain} 
\renewcommand{\paragraph}[1]{\vspace{6pt} \noindent \textbf{#1}\xspace}
\theoremstyle{plain}
\newtheorem{theorem}{Theorem}[section]
\newtheorem*{theorem*}{Theorem}
\newtheorem{lemma}[theorem]{Lemma}
\newtheorem{proposition}[theorem]{Proposition}
\newtheorem{claim}[theorem]{Claim}
\theoremstyle{definition}
\newtheorem{definition}[theorem]{Definition}
\newtheorem{example}[theorem]{Example}
\newcommand{\GL}{\mathrm{GL}}
\newcommand{\F}{\mathbb{F}}
\newcommand{\C}{\mathbb{C}}
\newcommand{\R}{\mathbb{R}}
\newcommand{\tr}[1]{#1^{\mathrm{t}}}
\newcommand{\rk}{\mathrm{rk}}
\newcommand{\sgn}{\mathrm{sgn}}
\newcommand{\poly}{\mathrm{poly}}
\newcommand{\M}{\mathrm{M}}
\renewcommand{\S}{\mathrm{S}}
\newcommand{\tuple}[1]{\mathbf{#1}}
\newcommand{\tens}[1]{\mathtt{#1}}
\newcommand{\spa}[1]{\mathcal{#1}}
\newcommand{\cA}{\spa{A}}
\newcommand{\cB}{\spa{B}}
\newcommand{\tA}{\tens{A}}
\newcommand{\tB}{\tens{B}}
\newcommand{\tG}{\tens{G}}
\newcommand{\vA}{\tuple{A}}
\newcommand{\vB}{\tuple{B}}
\newcommand{\vC}{\tuple{C}}
\newcommand{\vD}{\tuple{D}}
\newcommand{\aut}{\mathrm{Aut}}
\newcommand{\Aut}{\aut}
\newcommand{\Adj}{\mathrm{Adj}}
\newcommand{\Isom}{\mathrm{Iso}}
\newcommand{\algprobm}[1]{{\sc #1}\xspace}
\newcommand{\PI}{\algprobm{PI}}
\newcommand{\hPI}{\algprobm{HPI}}
\newcommand{\AI}{\algprobm{AI}}
\newcommand{\TFE}{\algprobm{TFE}}
\newcommand{\TI}{\algprobm{TI}}
\newcommand{\DeeTIlong}{\algprobm{$d$-Tensor Isomorphism}}
\renewcommand{\cc}[1]{\mathrm{#1}}
\newcommand{\gbinom}[3]{{\genfrac{[}{]}{0pt}{}{#1}{#2}}_{#3}}
\newcommand{\too}%
{\xrightarrow{\text{\raisebox{-3pt}{$\sim$}}\,}}
\newcommand{\arXiv}[2]{\href{https://arxiv.org/abs/#1}{arXiv:#1} #2}
\def\DJ{{\hbox{D\kern-.8em\raise.15ex\hbox{--}\kern.35em}}}
\begin{document}

\title[Average-case algorithms for polynomial isomorphism]{
Average-case algorithms for testing isomorphism of polynomials, algebras, and 
multilinear forms
}
\titlecomment{{\lsuper*}A preliminary version of this paper appeared 
at the 38th International Symposium on Theoretical Aspects of Computer Science 
(STACS 2021) \cite{GQT21}.}

\author[J.A.~Grochow]{Joshua A. Grochow}	
\address{Departments of Computer Science and Mathematics, 
University of Colorado---Boulder, Boulder, CO 80309-0430, United 
States.}	
\email{jgrochow@colorado.edu}  
\thanks{J. A. G. was partially supported by NSF grants DMS-1750319 and 
CCF-2047756.}	

\author[Y.~Qiao]{Youming Qiao}	
\address{Centre for Quantum Software and Information,
      School of Computer Science, Faculty of Engineering and Information
      Technology, University of Technology Sydney, NSW 2150, Australia.}	
\email{Youming.Qiao@uts.edu.au}  
\thanks{Y. Q. was partially supported by the Australian Research Council 
DP200100950.}	

\author[G.~Tang]{Gang Tang}	
\address{Centre for Quantum Software and Information,
    School of Computer Science, Faculty of Engineering and Information
    Technology, University of Technology Sydney, NSW 2150, Australia.}	
\urladdr{Gang.Tang-1@student.uts.edu.au}  
\thanks{G. T. was partially supported by the Australian Research Council
DP160101652.}	





\begin{abstract}
  \noindent We study the problems of testing isomorphism of polynomials, algebras, 
  and 
  multilinear forms. Our 
  first main 
  results are average-case algorithms for these problems. For 
  example, we develop 
  an algorithm that takes two cubic forms $f, g\in \F_q[x_1, 
  \dots, x_n]$, and decides whether $f$ and $g$ are 
  isomorphic 
  in time $q^{O(n)}$ \emph{for most} $f$. This average-case setting has direct 
  practical implications, having been studied in multivariate cryptography since 
  the 
  1990s.
  Our second result concerns the complexity of 
  testing equivalence of alternating trilinear forms. This problem is of interest 
  in both mathematics and cryptography. We show that this problem is 
  polynomial-time equivalent to testing equivalence of symmetric trilinear forms, 
  by showing that they are both
  \algprobm{Tensor Isomorphism}-complete (Grochow \& Qiao, \emph{ITCS}, 2021), 
  therefore is 
  equivalent 
  to testing isomorphism of cubic forms over most fields.
\end{abstract}

\maketitle


\section{Introduction}\label{sec:intro}

In this paper, we study isomorphism testing problems for polynomials, algebras, 
and multilinear forms. Our first set of results is algorithmic, namely presenting 
average-case algorithms for these problems (Section~\ref{subsec:algo}). Our second 
result is 
complexity-theoretic, concerning the problems of testing equivalence of symmetric 
and alternating 
trilinear forms (Section~\ref{subsec:complexity}).

\subsection{Average-case algorithms for polynomial isomorphism and more}
\label{subsec:algo}

\paragraph{The polynomial isomorphism problem.} Let $\F$ be a field, and let 
$X=\{x_1, \dots, x_n\}$ be a set of variables. Let 
$\GL(n, \F)$ be the general linear group consisting of $n\times n$ invertible 
matrices over $\F$. A natural group action of $A=(a_{i,j})\in \GL(n, \F)$ on the 
polynomial ring $\F[X]$ sends $f(x_1, \dots, x_n)$ to $f\circ 
A:=f(\sum_{j=1}^na_{1,j}x_j, \dots, 
\sum_{j=1}^na_{n, j}x_j)$. The \emph{polynomial isomorphism problem} (\PI) asks, 
given $f, g\in \F[X]$, whether there exists $A\in\GL(n, \F)$ such that $f=g\circ 
A$. In the literature, this problem was also called the polynomial 
equivalence problem \cite{AS05}. 

An important subcase of \PI is when the input 
polynomials are required to be 
homogeneous of degree $d$. In this case, this problem is referred to as the 
homogeneous polynomial isomorphism 
problem, denoted as $d$-\hPI. Homogeneous degree-$3$ (resp. degree-$2$) 
polynomials are also known as cubic (resp. quadratic) forms. 

In this article, we assume that a polynomial is represented in algorithms 
by its list of coefficients of the monomials, though other representations like 
algebraic circuits are also possible in this context \cite{Kay11}. Furthermore, we 
shall mostly restrict our attention to the case when the polynomial degrees are 
constant.

\paragraph{Motivations to study polynomial isomorphism.} The polynomial 
isomorphism problem has been studied in both multivariate cryptography 
and computational complexity. In 1996, inspired by the celebrated zero-knowledge 
protocol for graph isomorphism \cite{GMW91}, Patarin proposed to use \PI as the 
security basis of authentication and signature protocols \cite{Pat96}. This lead 
to a series of works 
on practical algorithms for \PI; see \cite{Bou11,BFP15,IQ17} and  
references therein. In 1997, Grigoriev studied shift-equivalence of polynomials, 
and discussed equivalence of polynomials under more general groups \cite{Gri97}.
In the early 2000s, Agrawal, Kayal and 
Saxena studied \PI from 
the computational complexity perspective. They related \PI with graph isomorphism 
and algebra 
isomorphism \cite{AS05,AS06}, and studied some special instances of 
\PI \cite{Kay11} as well 
as several related algorithmic tasks \cite{Sax06}. 

Despite these works, little progress has been made on algorithms \emph{with 
rigorous analysis} for the \emph{general} \PI. More specifically, Kayal's 
algorithm \cite{Kay11} runs in randomized polynomial time, works for the degree 
$d\geq 4$, and doesn't require the field to be finite. However, it only works in 
the \emph{multilinear} setting, namely when $f$ and $g$ are isomorphic to a common 
multilinear polynomial $h$. 
The algorithms from 
multivariate cryptography \cite{Bou11} either are heuristic, or need unproven 
assumptions.
While these works contain several nice ideas and insights, and their 
implementations show practical improvements, they are nonetheless heuristic in 
nature, and rigorously analyzing them seems difficult. Indeed, if any of these 
algorithms had worst-case analysis matching their heuristic performance, it would 
lead to significant progress on the long-open Group Isomorphism problem (see, 
e.g., \cite{LQ17,GQ_arxiv}).

\paragraph{Our result on polynomial isomorphism.} Our first result is an 
average-case algorithm with rigorous analysis for \PI 
over a finite field $\F_q$. As far as we know, this is the first non-trivial 
algorithm with rigorous analysis for \PI over finite fields. (The natural 
brute-force algorithm, namely enumerating all invertible matrices, 
runs in time $q^{n^2}\cdot \poly(n, \log q)$.)
Furthermore, the 
average-case setting is quite natural, as it is 
precisely the one studied 
multivariate cryptography. We shall elaborate on this further after stating our 
result. 

To state the result, let us define what a random polynomial means in this setting. 
Since we represent 
polynomials by 
their lists of coefficients, a random polynomial of 
degree $d$ 
is 
naturally the one whose coefficients of the monomials of degree $\leq d$ are 
independently uniformly 
drawn from $\F_q$. We also consider the homogeneous 
setting where only monomials of degree $=d$ are of interest.
\begin{theorem}\label{thm:main}
Let $d\geq 3$ be a constant. Let $f, g\in \F_q[x_1, \dots, x_n]$ be 
(resp. homogeneous)
polynomials of degree $\leq d$ (resp. $=d$). There exists an $q^{O(n)}$-time 
algorithm that decides whether $f$ and $g$ are isomorphic, for all but at most 
$\frac{1}{q^{\Omega(n)}}$ 
fraction of $f$. 

Furthermore, if $f$ and $g$ are isomorphic, then this algorithm also computes an 
invertible matrix $A$ which sends $f$ to $g$. 
\end{theorem}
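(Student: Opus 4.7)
My plan is to reduce to a clean matrix-space formulation and then exploit the fact that a uniformly random polynomial has essentially trivial symmetry, so that a single generically behaved canonical element cuts the search space down to $q^{O(n)}$.

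First I would reduce to the case of homogeneous cubic forms. For inhomogeneous $f$, any isomorphism must intertwine homogeneous components degree by degree, so one can essentially focus on the top-degree part; then by Theorem~\ref{thm:d_to_3} I may further assume $d=3$. A subtle point is that the random-input model must be preserved through these reductions, which would ideally use distribution-preserving variants so that a uniformly random $f$ produces an instance of cubic form equivalence that is still ``random enough'' for the next step.

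I would then recast cubic form equivalence in matrix-space language: a homogeneous cubic form $f$ corresponds to a symmetric $3$-tensor $\tens{T}_f$, whose slicing along one axis gives a linear subspace $\cA_f\subseteq \F_q^{n\times n}$ of symmetric matrices, and $f$ is isomorphic to $g$ if and only if there is $A\in\GL(n,\F_q)$ with $\tr{A}\,\cA_f\, A=\cA_g$ (together with a compatible action on the slicing coordinate). The key genericity claim is that for all but a $1/q^{\Omega(n)}$ fraction of $f$, the space $\cA_f$ contains a matrix $M$ with squarefree characteristic polynomial, so the centralizer of $M$ in $\GL(n,\F_q)$ is a maximal torus of order $q^{O(n)}$. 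Given such an $M$, any isomorphism $A$ must send $M$ to some $M'\in\cA_g$ with the same characteristic polynomial; hence enumerating the $q^{O(n)}$ candidate images $M'\in\cA_g$ and, for each, the $q^{O(n)}$ coset representatives conjugating $M$ to $M'$ yields $q^{O(n)}$ candidates for $A$. For each candidate we verify in $\poly(n,\log q)$ time whether $\tr{A}\,\cA_f\, A=\cA_g$, which both decides the problem and produces $A$ when it exists.

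The main obstacle will be establishing the genericity statement with the required $1/q^{\Omega(n)}$ failure rate. This amounts to a Schwartz--Zippel-type estimate on the discriminant of the generic characteristic polynomial restricted to $\cA_f$, combined with a careful check that the symmetric matrices in $\cA_f$ arising from a random symmetric cubic form are sufficiently spread out in $\F_q^{n\times n}$ to hit a regular semisimple element with overwhelming probability. A secondary technical difficulty is that the reductions from $d$-\hPI to $3$-\hPI and from inhomogeneous to homogeneous must not concentrate typical inputs onto sparse ``bad'' sets, so they likely require bespoke distribution-preserving versions rather than black-box invocations.
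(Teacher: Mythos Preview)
Your central algorithmic step has a genuine gap. When you slice a cubic form $f$ along one axis to obtain the space $\cA_f$ of symmetric matrices, an isomorphism $A$ acts on each slice by \emph{congruence} $M\mapsto \tr{A}MA$ (together with an $A$-dependent linear recombination of the slices), not by conjugacy $M\mapsto A^{-1}MA$. Under congruence the characteristic polynomial is not an invariant (for instance $M$ and $\lambda^2 M$ are congruent via $\lambda I_n$), so the claim ``any isomorphism $A$ must send $M$ to some $M'\in\cA_g$ with the same characteristic polynomial'' is simply false. More damagingly, the stabiliser of a single nondegenerate symmetric matrix under congruence is an orthogonal group of order $q^{\Theta(n^2)}$, not a maximal torus of order $q^{O(n)}$; so even after pinning down a target $M'\in\cA_g$, the coset $\{A:\tr{A}MA=M'\}$ is as large as brute force. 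The regular-semisimple/torus picture you invoke lives in the conjugacy world and does not transfer to the congruence action that actually governs cubic form isomorphism.

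The paper sidesteps precisely this obstacle. Instead of working with the whole space $\cA_f$, it spends $q^{O(n)}$ time enumerating matrices $T_1$ so that every isomorphism factors as $T_1T_2$ with $T_2=\begin{bmatrix}I_r & 0\\ 0 & R\end{bmatrix}$ for a fixed constant $r$. Because $T_2$ fixes $x_1,\dots,x_r$, the coefficients of $x_i\cdot(\text{quadratic in }x_{r+1},\dots,x_n)$ for $i\in[r]$ yield an $r$-\emph{tuple} $\vC=(C_1,\dots,C_r)$ of symmetric $(n-r)\times(n-r)$ matrices on which $R$ acts by simultaneous congruence $C_i\mapsto\tr{R}C_iR$ with \emph{no} linear recombination among the $C_i$. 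The relevant stabiliser is now $\Aut(\vC)=\{R:\tr{R}\vC R=\vC\}$, and the key technical Lemma~\ref{lem:tech} shows that for $r=8$ random symmetric matrices this group has order at most $q^{n-r}$ with probability $1-q^{-\Omega(n)}$; one then computes and enumerates the coset $\Isom(\vC,\vD)$ via Theorem~\ref{thm:autometry-algorithms}. A secondary point: the paper does not reduce degree $d$ to $3$ through a \TI-style reduction---as you yourself worried, such reductions blow up the dimension and destroy the input distribution---but instead reruns the same argument directly on the monomials $x_i^{d-2}x_jx_k$.
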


Let us briefly indicate the use of this average-case setting in multivariate 
cryptography. In the authentication scheme described in 
\cite{Pat96}, the public key consists of two polynomials $f, 
g\in\F_q[x_1, \dots, x_n]$, where $f$ is a random polynomial, and $g$ is obtained 
by applying a random invertible matrix to $f$. Then $f$ and $g$ form the public 
key, and any isomorphism from $f$ to $g$ can serve as the private key. Therefore, 
the algorithm in 
Theorem~\ref{thm:main} can be used to recover a private key for most $f$.

\paragraph{Adapting the algorithm strategy to more isomorphism problems.} In 
\cite{AS05,AS06}, the algebra isomorphism problem (\AI) was studied and shown to 
be (almost) polynomial-time equivalent to \PI. In \cite{GQ_arxiv}, many more 
problems are demonstrated to be polynomial-time equivalent to \PI, including the 
trilinear form equivalence problem (\TFE). In these reductions, due to the blow-up 
of the parameters, the $q^{O(n)}$-time algorithm in Theorem~\ref{thm:main} does 
not translate to moderately exponential-time, average-case algorithms for these 
problems. The 
algorithm design \emph{idea}, however, does translate to give moderately 
exponential-time, 
average-case algorithms for \AI and \TFE. These will be presented in 
Section~\ref{subsec:adjust}.

\subsection{Complexity of symmetric and alternating trilinear form 
equivalence}\label{subsec:complexity}

\paragraph{From cubic forms to symmetric and alternating trilinear forms.} 
In the context of polynomial isomorphism, cubic forms are of 
particular interest. In complexity theory, it was shown that $d$-\hPI reduces to 
cubic form isomorphism over fields with $d$th roots of unity \cite{AS05,AS06}. In 
multivariate cryptography, cubic form isomorphism also received special attention, 
since using higher degree forms results in less efficiency in the cryptographic 
protocols. 

Just as quadratic forms are closely related with symmetric bilinear forms, cubic 
forms are closely related with symmetric trilinear forms. Let $\F$ be a field of 
characteristic not $2$ or $3$, and let $f=\sum_{1\leq i\leq j\leq k\leq 
n}a_{i,j,k}x_ix_jx_k\in\F[x_1, \dots, x_n]$ be a cubic form. For any $i, j, 
k\in[n]$, let $1\leq i'\leq j'\leq k'\leq n$ be the result of sorting $i, j, k$ in 
the increasing order, and set $a_{i,j,k}=a_{i',j',k'}$.
Then we can define a symmetric\footnote{That is, for any permutation 
$\sigma\in\S_3$, $\phi(u_1, u_2, u_3)=\phi(u_{\sigma(1)}, 
u_{\sigma(2)}, 
u_{\sigma(3)})$} trilinear form 
$\phi_f:\F^n\times\F^n\times\F^n\to\F$ 
by 
$$\phi_f(u, v, 
w)=\sum_{i\in[n]}a_{i,i,i}u_iv_iw_i+\frac{1}{3}\cdot \sum_{\substack{i,j,k\in[n] 
\\ \text{ two of } i,j,k\text{ are the 
same}}}a_{i,j,k}u_iv_jw_k+\frac{1}{6}\cdot \sum_{\substack{i, 
j, k\in[n] \\ i, j, k \text { all 
different}}}a_{i,j,k}u_iv_jw_k.$$
It can be
seen easily that for any $v=\tr{(v_1, \dots, v_n)}\in \F^n$, $f(v_1, \dots, 
v_n)=\phi_f(v, v, v)$. 

In the theory of bilinear forms, symmetric and skew-symmetric bilinear forms are 
two important special subclasses. For example, they are critical in the 
classifications of classical groups \cite{Wey97} and finite simple groups 
\cite{Wilson_book}. 
For trilinear forms, we also have skew-symmetric trilinear forms. In fact, to 
avoid some complications over fields of characteristics $2$ or $3$, we shall 
consider 
alternating trilinear forms which are closely related to skew-symmetric ones. 
For trilinear forms, the exceptional groups of type $E_6$ can be constructed as 
the stabilizer of certain symmetric trilinear forms, and those of type $G_2$ can 
be constructed as the stabilizer of certain alternating trilinear forms.

We say that 
a trilinear form $\phi:\F^n\times\F^n\times\F^n\to\F$ is \emph{alternating}, if 
whenever two 
arguments of 
$\phi$ are equal, $\phi$ 
evaluates to zero. Note that this implies skew-symmetry, namely for any $u_1, u_2, 
u_3\in \F^n$ and any 
$\sigma\in\S_3$, $\phi(u_1, 
u_2, u_3)=\sgn(\sigma)\cdot \phi(u_{\sigma(1)}, u_{\sigma(2)}, u_{\sigma(3)})$. 
Over fields of characteristic zero or $> 3$, this is equivalent to skew-symmetry.

\paragraph{The trilinear form equivalence problem.}
Given a trilinear 
form $\phi:\F^n\times\F^n\times\F^n\to\F$, $A\in \GL(n, \F)$ 
naturally acts on $\phi$ by sending it to $\phi\circ A:=\phi(A^{-1}(u), A^{-1}(v), 
A^{-1}(w))$. The \emph{trilinear form equivalence 
problem} then asks, given two trilinear forms $\phi, 
\psi:\F^n\times\F^n\times\F^n\to\F$, whether there exists $A\in\GL(n, \F)$, such 
that $\phi=\psi\circ A$. Over fields of 
characteristic not 
$2$ or $3$, two cubic forms $f$ and $g$ are isomorphic if and only if $\phi_f$ and 
$\phi_g$ are equivalent, so cubic form isomorphism is polynomial-time equivalent 
to symmetric trilinear form equivalence over such fields. Note that for clarity, 
we reserve the term
``isomorphism'' for 
polynomials (and 
cubic forms), and use ``equivalence'' for multilinear forms.

\paragraph{Motivations to study alternating trilinear form equivalence.} 
Our main interest is to study the 
complexity of alternating trilinear form equivalence, with the 
following motivations. 

The first motivation comes from cryptography. To store a symmetric trilinear form 
on $\F_q^n$, $\binom{n+2}{3}$ field elements are required. To store an alternating 
trilinear form on $\F_q^n$, $\binom{n}{3}$ field elements are needed. The 
difference between $\binom{n+2}{3}$ and $\binom{n}{3}$ could be significant for 
practical 
purposes. For example, when $n=9$, $\binom{n+2}{3}=\binom{11}{3}=165$, while 
$\binom{n}{3}=\binom{9}{3}=84$. This means that in the authentication protocol of 
Patarin \cite{Pat96}, using alternating trilinear forms instead of cubic forms for 
$n=9$,\footnote{The parameters of the cryptosystem are $q$ 
and $n$. When $q=2$, $n=9$ is not secure as it can be solved in practice 
\cite{BFFP11}. So $q$ needs to be large for $n=9$ to be secure. Interestingly, 
according to \cite[pp. 227]{Bou11}, the parameters $q=16$ and $n=8$ seemed 
difficult for practical attacks via Gr\"obner basis. }
one saves almost one half in the public key size, which is an important saving in 
practice. 

The second 
motivation originates from comparing symmetric and alternating bilinear forms. It 
is well-known that, \emph{in the bilinear case},
the structure of alternating forms is simpler than that for symmetric ones 
\cite{Lang}. 
Indeed, 
up to equivalence, an alternating 
bilinear 
form is completely 
determined by its rank over any 
field, while the classification of symmetric bilinear forms depends crucially on 
the underlying 
field. For example, recall that over $\R$, a symmetric 
form is determined by its ``signature'', so just the rank is not enough.  

A third motivation is implied by the representation theory of the general 
linear groups; namely that alternating trilinear forms are the ``last'' 
natural case for $d=3$. If we consider the action of $\GL(n,\C)$ acting on 
$d$-tensors in $\C^n \otimes \C^n \otimes \dotsb \otimes \C^n$ diagonally (that 
is, the same matrix acts on each tensor factor), it is a classical result 
\cite{Wey97} that the 
invariant subspaces of $(\C^n)^{\otimes d}$ under this action are 
completely determined by the irreducible representations of $\GL(n,\C)$. When 
$d=3$, there are only three such representations, which correspond precisely to: 
symmetric trilinear forms, Lie algebras, and alternating trilinear forms. From the 
complexity point of view, it was previously shown that isomorphism of symmetric 
trilinear forms \cite{AS05, AS06} and Lie algebras \cite{GQ_arxiv} are equivalent 
to algebra isomorphism. Here we show that the last case, isomorphism of 
alternating trilinear forms, is also equivalent to the others. 

\paragraph{The complexity of alternating trilinear form equivalence.} Given 
the above discussion on the comparison between symmetric and alternating bilinear 
forms, one may wonder whether alternating trilinear form 
equivalence was easier than symmetric trilinear form equivalence. Interestingly, 
we show that this is not the case; rather, 
 they are polynomial-time equivalent.

\begin{theorem}\label{thm:complexity}
The alternating trilinear form equivalence problem is polynomial-time equivalent 
to the symmetric trilinear form equivalence problem. 
\end{theorem}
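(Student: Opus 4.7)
The plan is to establish Theorem~\ref{thm:complexity} not by a direct reduction, but by proving that both problems are complete for the class \TI of Tensor Isomorphism-equivalent problems, introduced in \cite{GQ_arxiv}. Once both symmetric and alternating trilinear form equivalence are shown to be \TI-complete, polynomial-time equivalence between them follows immediately.

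For symmetric trilinear form equivalence, containment in \TI is a tautology (symmetric 3-tensors are 3-tensors). Hardness follows by composing two known reductions: 3-\hPI over fields of appropriate characteristic is \TI-hard via the line of work in \cite{AS05,AS06,GQ_arxiv}, and cubic form isomorphism reduces to symmetric trilinear form equivalence via the correspondence $f\mapsto \phi_f$ defined in Section~\ref{subsec:complexity}, which is invertible over fields of characteristic not $2$ or $3$ by the identity $f(v)=\phi_f(v,v,v)$.

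For alternating trilinear form equivalence, containment in \TI is again immediate. The real content lies in the reverse reduction, and this is what I would spend the bulk of the effort on. Given a 3-tensor $\phi\in V_1^*\otimes V_2^*\otimes V_3^*$, the natural construction is to set $W = V_1\oplus V_2\oplus V_3$ and define $\Phi_\phi\in \Lambda^3 W^*$ as the antisymmetrization
\[
\Phi_\phi(x,y,z) \;=\; \sum_{\sigma\in \S_3} \sgn(\sigma)\,\phi\bigl(x_{\sigma(1)},\,y_{\sigma(2)},\,z_{\sigma(3)}\bigr),
\]
where each argument is decomposed along the direct sum as $x=x_1+x_2+x_3$ with $x_i\in V_i$, and $\phi$ is extended by zero off $V_1^*\times V_2^*\times V_3^*$. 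The ``easy'' direction is clear: any equivalence $(A_1,A_2,A_3)$ between $\phi$ and $\psi$ assembles into a block-diagonal $A\in \GL(W)$ sending $\Phi_\phi$ to $\Phi_\psi$.

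The hard step, and the main obstacle, is the converse: a generic $A\in\GL(W)$ intertwining $\Phi_\phi$ and $\Phi_\psi$ need not respect the tripartition $W=V_1\oplus V_2\oplus V_3$, so the raw antisymmetrization is too lossy. My plan is to rigidify the construction with a gadget that forces any equivalence to preserve the direct-sum decomposition up to a known permutation of the three components, which one can then strip off by trying all $3!$ relabelings. Natural candidates include attaching to $\Phi_\phi$ a ``tag'' alternating form supported on an auxiliary subspace and a carefully chosen rigid sub-form whose orbit under $\GL(W)$ stabilizes the desired flag. One must then verify (i) that the combined alternating form $\widetilde{\Phi}_\phi$ is well-defined and polynomial-sized in $\dim V_i$, (ii) that every equivalence $\widetilde{\Phi}_\phi\sim \widetilde{\Phi}_\psi$ factors through a block-permutation-respecting isomorphism, and (iii) that pulling back such an isomorphism recovers an honest 3-tensor isomorphism $\phi\cong\psi$. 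Calibrating the gadget so that none of its own symmetries spuriously identify inequivalent $\phi$'s—while keeping dimensions polynomial—is where I expect the technical work to concentrate.
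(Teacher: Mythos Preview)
Your high-level strategy---prove both problems are \TI-complete and conclude---is exactly what the paper does, and your treatment of the symmetric case (via cubic forms and the $f\mapsto\phi_f$ correspondence) is a valid alternative to the paper's route, which instead reduces from symmetric matrix tuple pseudo-isometry using a symmetric analogue of the gadget below. Either works, with the usual characteristic caveats.

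For the alternating case, however, there is a concrete divergence and an acknowledged gap. The paper does \emph{not} start from general \algprobm{3TI} on $V_1\oplus V_2\oplus V_3$ and antisymmetrize; it starts from the \TI-complete problem \emph{alternating matrix tuple pseudo-isometry} (tuples $\vA\in\Lambda(n,\F)^m$ up to $(C,D)\in\GL_n\times\GL_m$). This choice is strategic: the input is already alternating in two of the three directions, so building a genuinely alternating $3$-tensor $\tilde\tA$ on $\F^{n+m}$ requires only three copies of $\tA$ arranged appropriately, not six. The gadget the paper attaches is then quite specific: a block $\tG$ of size $(n+1)^2\times(n+1)^2\times(n+m)$ whose frontal slices carry $\pm I_{n+1}$ blocks, engineered so that the first $n$ frontal slices of the final tensor $\hat\tA$ have rank $\ge 2(n+1)$ while all later slices have rank $\le 2n$. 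This rank gap is what forces any equivalence $P$ to be block lower-triangular with respect to the $n\mid m\mid (n+1)^2$ partition, after which one reads off a pseudo-isometry from the upper-left block. Your plan leaves precisely this mechanism unspecified (``attaching a tag alternating form \dots\ whose orbit stabilizes the desired flag''), and without a concrete rank or invariant argument there is no proof. If you pursue your tripartite antisymmetrization route, you will need a gadget that separates all three summands $V_1,V_2,V_3$ (not just one from the rest), which is strictly harder than the paper's setup; the paper's choice of starting problem is what makes a single rank-threshold gadget suffice.
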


Note here that the reduction from alternating to symmetric trilinear form 
equivalence requires us to go through the tensor isomorphism problem, which causes 
polynomial blow-ups in the dimensions of the underlying vector spaces. Therefore, 
though these two problems are polynomial-time equivalent, these problems may 
result in cryptosystems with different efficiencies for a given security level. 

\subsection{Previous works}

\paragraph{The relation between \PI and \AI.} As mentioned in 
Section~\ref{subsec:algo}, the degree-$d$ homogeneous polynomial 
isomorphism 
problem 
($d$-\hPI) was shown to be almost equivalent to the algebra isomorphism problem 
(\AI) 
in \cite{AS05,AS06}. (See Section~\ref{subsec:adjust} for the formal definition of 
algebra isomorphism problem.) Here, almost refers to that for the reduction from 
$d$-\hPI 
to 
\AI in \cite{AS05,AS06}, the underlying fields are required to contain a $d$th 
root of unity. When $d=3$, this means that the characteristic of the underlying 
field $p$ satisfies that $p=2 \mod 3$ or $p=0$, which amounts to half of the 
primes. 
In \cite{GQ_arxiv}, another reduction from $3$-\hPI to \AI was 
presented, which works for fields of characteristics not $2$ or $3$. The reduction 
from \AI to $3$-\hPI in \cite{AS06} works over any field. 

\paragraph{The tensor isomorphism complete class.}
In \cite{FGS19,GQ_arxiv}, polynomial-time equivalences are proved between 
isomorphism 
testing of many more 
mathematical structures, including tensors, matrix spaces, polynomial maps, and so 
on. These problems arise from many areas: besides multivariate cryptography and 
computational complexity, they appear in quantum information, machine learning, 
and computational group theory. This motivates the authors of \cite{GQ_arxiv} to 
define the tensor isomorphism complete class \TI, which we recall here.
\begin{definition}[{The \DeeTIlong problem, and the complexity class 
\TI}]\label{def:TI}
\DeeTIlong over a field $\F$ is the problem: 
given two $d$-way 
arrays $\tA = (a_{i_1,\dotsc,i_d})$ and $\tB=(b_{i_1, \dots, i_d})$, where 
$i_k\in[n_k]$ for $k\in[d]$, 
and $a_{i_1, \dots, i_d}, b_{i_1, 
\dots, i_d}\in \F$,
decide whether there are $P_k\in\GL(n_k,\F)$ for $k\in[d]$, such that for all 
$i_1,\dotsc,i_d$, 
\begin{equation}\label{eq:d_array_action}
a_{i_1,\dotsc,i_d} = \sum_{j_1,\dotsc,j_d} b_{j_1, \dotsc, j_d} (P_1)_{i_1, j_1} 
(P_2)_{i_2, j_2} \dotsb (P_d)_{i_d,j_d}.
\end{equation}

For any field $\F$, $\cc{TI}_\F$ denotes the class of problems that are 
polynomial-time Turing (Cook) reducible to \DeeTIlong over 
$\F$, for some $d$.
A problem is $\cc{TI}_\F$-complete, if it is in $\cc{TI}_\F$, and 
\DeeTIlong over $\F$ for any $d$ reduces to this problem. 

When a problem is naturally defined and is $\cc{TI}_\F$-complete over any $\F$, 
then we can simply write that it is $\cc{TI}$-complete.
\end{definition} 

The authors of \cite{GQ_arxiv} further utilised this connection between tensors 
and groups to show search-to-decision, counting-to-decision, and nilpotency class 
results for $p$-group isomorphism \cite{GQ21_CCC}.

\paragraph{Average-case algorithms for matrix space isometry.} In 
\cite{LQ17,BLQW20}, motivated by testing isomorphism of $p$-groups 
(widely believed to be the hardest cases of Group Isomorphism, see e.g. 
\cite{GQ17}), the 
algorithmic problem alternating matrix space isometry was studied. (In the 
literature \cite{Wil09a}, this problem was also known as the 
alternating bilinear map pseudo-isometry problem.) That problem asks the 
following: given two linear spaces of alternating matrices $\cA, \cB\leq\Lambda(n, 
q)$, decide whether there exists $T\in\GL(n, q)$, such that $\cA=\tr{T}\cB 
T=\{\tr{T}BT : B\in \cB\}$. (See Section~\ref{sec:prel} for the definition of 
alternating matrices.) The main result of \cite{BLQW20}, improving upon the 
one in \cite{LQ17}, is an 
average-case algorithm for this problem in time $q^{O(n+m)}$, where $m=\dim(\cA)$. 

\subsection{Remarks on the technical side}\label{subsec:tech}

\paragraph{Techniques for proving Theorem~\ref{thm:main}.} The algorithm 
for \PI in Theorem~\ref{thm:main} is based on the algorithmic idea from 
\cite{LQ17,BLQW20}. However, to adapt that idea to the \PI setting, there are
several interesting conceptual and technical difficulties. 

One conceptual 
difficulty is that for alternating matrix space isometry, there are actually two 
$\GL$ actions, one is by $\GL(n, q)$ as explicitly described above, and the other 
is by $\GL(m, q)$ performing the basis change of matrix spaces. The algorithm 
in \cite{BLQW20} crucially uses that the $\GL(m, q)$ action is ``independent'' of 
the $\GL(n, q)$ action. For \PI, there is only one $\GL(n, q)$-action acting 
on all the variables. Fortunately, as shown in Section~\ref{subsec:cubic}, there 
is 
still a natural way of applying the the 
basic idea from \cite{LQ17,BLQW20}.

One technical difficulty is that the analysis in \cite{BLQW20} relies on 
properties of random alternating matrices, while for $3$-\hPI, the analysis relies 
on properties of random symmetric matrices. To adapt the proof strategy in 
\cite{BLQW20} (based on \cite{LQ17}) to the symmetric setting is not difficult, 
but suggests some interesting differences between symmetric and alternating 
matrices (see the discussion after Claim~\ref{claim:LQ17}).

\paragraph{Techniques for proving Theorem~\ref{thm:complexity}.} By \cite{FGS19}, 
the trilinear form equivalence problem is in \TI, and so are the 
special cases symmetric and alternating trilinear form equivalence. The proof of 
Theorem~\ref{thm:complexity} goes by showing that both symmetric and alternating 
trilinear form equivalence are \TI-hard. 

Technically, the basic proof strategy is to adapt a 
gadget construction, which originates from \cite{FGS19} and then is further used 
in \cite{GQ_arxiv}. To use that gadget in the trilinear form setting does require 
several non-trivial ideas. First, we identify the right \TI-complete problem to 
start with, namely the alternating (resp. symmetric) matrix space isometry 
problem. Second, we need to arrange a $3$-way array $\tA$, representing a linear 
basis of an alternating (resp. symmetric) matrix spaces, into one representing an 
alternating trilinear form. This requires $3$ copies of $\tA$, assembled in an 
appropriate manner. Third, we need to add the gadget in three directions 
(instead of just two as in previous results). 
All 
these features were not present in \cite{FGS19,GQ_arxiv}. The 
correctness proof also requires certain tricky twists compared with those in 
\cite{FGS19} and 
\cite{GQ_arxiv}.

\paragraph{Structure of the paper.} In Section~\ref{sec:prel} we present certain 
preliminaries. In Section~\ref{sec:algo} we show average-case algorithms for 
polynomial isomorphism, algebra isomorphism, and trilinear form isomorphism, 
proving Theorem~\ref{thm:main}. In Section~\ref{sec:complexity} we prove 
Theorem~\ref{thm:complexity}.

\section{Preliminaries}\label{sec:prel}

\paragraph{Notations.} We collect the notations here, though some of them have 
appeared in Section~\ref{sec:intro}. Let $\F$ be a field. Vectors in $\F^n$ are 
column vectors. 
Let $e_i$ denote the $i$th standard basis vector of $\F^n$. Let 
$\M(\ell\times n, \F)$ be the linear 
space of $\ell\times n$ matrices over $\F$, and set $\M(n, \F):=\M(n\times n, 
\F)$. Let $I_n$ denote the identity matrix of size $n$. For $A\in \M(n, 
\F)$, $A$ is \emph{symmetric} if $\tr{A}=A$, and \emph{alternating} if for 
every $v\in \F^n$, $\tr{v}Av=0$. When the characteristic of $\F$ is not $2$, $A$ 
is 
alternating if and only if $A$ is skew-symmetric. Let $\S(n, \F)$ be the linear 
space of $n\times 
n$ symmetric matrices over $\F$, and let $\Lambda(n, \F)$ be the linear space of 
alternating matrices over $\F$. When $\F=\F_q$, 
we may write $\M(n, \F_q)$ as $\M(n, q)$. We use $\langle\cdot \rangle$ to denote 
the linear span.

\paragraph{$3$-way arrays.} A $3$-way array over a field $\F$ is an array with 
three indices whose elements are from $\F$. We use $\M(n_1\times n_2\times n_3, 
\F)$ to denote the linear space of $3$-way arrays of side lengths $n_1\times 
n_2\times n_3$ over $\F$. 

Let $\tA\in\M(\ell\times n\times m, \F)$. 
For $k\in[m]$, the $k$th 
\emph{frontal} slice of $\tA$ is $(a_{i,j,k})_{i\in[\ell], j\in[n]}\in 
\M(\ell\times n, 
\F)$. For $j\in[n]$, the $j$th \emph{vertical} slice of $\tA$ is 
$(a_{i,j,k})_{i\in[\ell], k\in[m]}\in 
\M(\ell\times m, \F)$. For $i\in[\ell]$, the $i$th \emph{horizontal} slice of 
$\tA$ is 
$(a_{i,j,k})_{j\in[n], k\in[m]}\in \M(n\times m, \F)$. We shall often think of 
$\tA$ as a matrix tuple in $\M(\ell\times n, \F)^m$ consisting of its frontal 
slices. 

A natural action of $(P, Q, R)\in \GL(\ell, \F)\times \GL(n, \F)\times \GL(m, \F)$ 
sends a 
$3$-way array $\tA\in \M(\ell\times n\times m, \F)$ to $\tr{P}\tA^RQ$, defined as 
follows. First represent $\tA$ as an $m$-tuple of $\ell\times n$ matrices 
$\vA=(A_1, \dots, A_m)\in\M(\ell\times n, \F)^m$. Then $P$ and $Q$ send $\vA$ to 
$\tr{P}\vA Q=(\tr{P}A_1Q, \dots, \tr{P}A_mQ)$, and $R=(r_{i,j})$ sends $\vA$ to 
$(A_1', \dots, A_m')$ where $A_i'=\sum_{j\in[m]}r_{i,j}A_j$. Clearly, the actions 
of $P$, $Q$, and $R$ commute. The resulting $m$-tuple of $\ell\times n$ matrices 
obtained by applying $P$, $Q$, and $R$ to $\vA$ is then $\tr{P}\vA^RQ$. Note that 
up to possibly relabelling indices, the entries of $\tr{P}\vA^RQ$ are explicitly 
defined as in Equation~\ref{eq:d_array_action}.

\paragraph{Useful results.} Let $\vA=(A_1, \dots, A_m), \vB=(B_1, \dots, 
B_m)\in\M(n, \F)^m$. Given $T\in \GL(n, \F)$, let $\tr{T}\vA T=(\tr{T}A_1 T, 
\dots, \tr{T}A_m T)$. We say that $\vA$ and $\vB$ are \emph{isometric}, if there 
exists 
$T\in\GL(n, \F)$ such that $\tr{T}\vA T=\vB$. Let $\Isom(\vA, \vB)=\{T\in\GL(n, 
\F) 
: \vA =\tr{T}\vB T\}$, and set $\Aut(\vA):=\Isom(\vA, \vA)$. Clearly, $\Aut(\vA)$ 
is a subgroup of $\GL(n, q)$, and $\Isom(\vA, \vB)$ is either empty or a coset of 
$\Aut(\vA)$.
 
\begin{theorem}[\cite{BW12,IQ17}]\label{thm:autometry-algorithms}
Let $\vA,\vB\in\S(n, q)^m$ (resp. $\Lambda(n,q)^m$) for some odd $q$. There exists 
a $\poly(n,m, 
q)$-time deterministic algorithm which takes $\vA$ and $\vB$ as inputs and outputs 
$\Isom(\vA,\vB)$, specified by (if nonempty) a generating set of $\Aut(\vA)$ (by 
the algorithm in~\cite{BW12}) and a coset representative $T\in\Isom(\vA,\vB)$ (by 
the algorithm in~\cite{IQ17}).
\end{theorem}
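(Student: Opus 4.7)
Since $\Isom(\vA, \vB)$ is either empty or a single coset of $\Aut(\vA)$, the plan is to assemble the result from two pieces that are treated independently in the cited works: first, compute a generating set of the autometry group $\Aut(\vA)$ (Brooksbank--Wilson), and second, decide whether $\Isom(\vA, \vB)$ is nonempty and, if so, exhibit one coset representative $T$ (Ivanyos--Qiao). The symmetric and alternating cases are handled by the same framework, with only the type of involution on the adjoint algebra changing.

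For the autometry step, the central object is the adjoint algebra
\[
\Adj(\vA) := \{(X, Y) \in \M(n, q)^2 : \tr{X} A_i = A_i Y \text{ for all } i \in [m]\},
\]
an associative $\F_q$-algebra carrying the involution $(X, Y) \mapsto (Y, X)$; the group $\Aut(\vA)$ is exactly the unitary group of this involution (the $T$ such that $(T, T^{-1})$ is fixed by $*$). First I would solve a linear system to realize $\Adj(\vA)$ concretely, then apply a constructive Wedderburn decomposition into simple $*$-algebras, and finally read off generators of the unitary group inside each simple factor from the standard generators of the classical groups that appear over $\F_q$.

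For the coset-representative step, I would reduce the existence of $T$ with $\tr{T}\vA T = \vB$ to a module-isomorphism problem: $\vA$ and $\vB$ define two modules over (essentially) the same algebra, and any module isomorphism yields a candidate matrix $T$. The subtlety is that a bare module isomorphism need not intertwine the symmetric or alternating forms themselves. To correct this, I would compose with a well-chosen element from the isometry group of a reference form on each simple summand, reducing to a bounded number of classical form-equivalence problems over $\F_q$ (symmetric bilinear, alternating, or Hermitian forms), each of which admits a deterministic polynomial-time solution when $q$ is odd.

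The main obstacle---and the source of the odd-$q$ hypothesis---is this involution-compatibility step: converting a module isomorphism into one that respects the bilinear-form structure uses the Witt-type classification of forms and of simple $*$-algebras over $\F_q$, which behaves cleanly only when $2$ is invertible. Once each simple summand is handled, the corrections glue into a single $T \in \Isom(\vA, \vB)$; paired with the generators from the first step, this specifies the entire coset as required.
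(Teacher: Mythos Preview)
The paper does not give a proof of this statement: it is stated in the preliminaries as a cited result from~\cite{BW12,IQ17}, with no argument supplied. So there is no ``paper's own proof'' to compare against.

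That said, your plan is an accurate high-level summary of what those two references actually do. The adjoint $*$-algebra $\Adj(\vA)$ with the exchange involution is indeed the central object in both works; $\Aut(\vA)$ is its unitary group, and \cite{BW12} computes generators by a constructive Wedderburn--Malcev decomposition followed by identification of each simple $*$-factor with a classical group. For the coset representative, \cite{IQ17} proceeds essentially as you describe: reduce to a module-isomorphism problem over the adjoint algebra, then correct the resulting isomorphism to respect the form using the classification of Hermitian forms over simple $*$-algebras (which is where the odd-$q$ hypothesis enters). One small refinement: the running time in~\cite{IQ17} is $\poly(n,m,q)$ rather than $\poly(n,m,\log q)$ precisely because the structure-recognition of the simple $*$-factors and the form-equivalence steps over $\F_q$ are only known to be polynomial in $q$ deterministically; your sketch does not misstate this, but it is worth being explicit about where the $q$-dependence comes from if you intend to write out a full proof.
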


\section{Average-case algorithms for polynomial isomorphism and 
more}\label{sec:algo}

We shall present the algorithm for the cubic form isomorphism problem in detail in 
Section~\ref{subsec:cubic}. Based on this result, we present the proof of our main 
Theorem~\ref{thm:main} in 
Section~\ref{app:main}. The proof of the main technical lemma, 
Lemma~\ref{lem:tech}, is in Section~\ref{app:lem}. 
We will present our results for problems such as algebra 
isomorphism in 
Section~\ref{subsec:adjust}.

\subsection{Cubic form isomorphism over fields of odd order}\label{subsec:cubic}
We present the algorithm for cubic form isomorphism over 
fields of odd characteristic, as this algorithm already captures the essence of 
the idea, and cubic forms are most interesting from the \PI perspective as 
mentioned in Section~\ref{subsec:complexity}. A full proof of 
Theorem~\ref{thm:main}, which is a relatively minor extension of 
Theorem~\ref{thm:cubic_form}, is put in Section~\ref{app:main}. 

\begin{theorem}\label{thm:cubic_form}
Let $\F_q$ be a finite field of odd order, and $X=\{x_1, \dots, 
x_n\}$ be a set of commuting variables. Let $f, g\in \F_q[X]$ be 
two
cubic forms. There exists a deterministic algorithm that decides 
whether $f$ and $g$ are isomorphic in time $q^{O(n)}$, for all but at most 
$\frac{1}{q^{\Omega(n)}}$ fraction of $f$. 
\end{theorem}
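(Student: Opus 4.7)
The plan is to reduce cubic form isomorphism to a problem about tuples of symmetric matrices and then adapt the algorithmic framework of \cite{LQ17,BLQW20} (for alternating matrix space isometry) to the ``coupled'' diagonal $\GL(n,q)$-action that arises here. To each cubic form $f\in\F_q[x_1,\ldots,x_n]$ associate the tuple $\vA_f=(A_1,\ldots,A_n)\in\S(n,q)^n$ of horizontal slices of the $3$-way array encoding the symmetric trilinear form $\phi_f$; that is, $(A_k)_{ij}$ is (up to a combinatorial factor) the coefficient of $x_ix_jx_k$ in $f$. Setting $M_c^{\vA}:=\sum_{i\in[n]} c_iA_i$, a direct computation shows that $g=f\circ P$ is equivalent to
\[
B_k \;=\; \tr{P}\,M_{Pe_k}^{\vA}\,P \qquad (k\in[n]),
\]
where $\vB=\vA_g$. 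The distinguishing feature from matrix space isometry is that $P$ acts on all three indices of the tensor simultaneously --- the ``$R=P$'' coupling discussed in Section~\ref{subsec:tech}, in contrast with the decoupled $\GL(n,q)\times\GL(m,q)$ action studied in \cite{LQ17,BLQW20}.

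\textbf{Algorithm.} Following the idea of \cite{LQ17,BLQW20}, enumerate a candidate $v\in\F_q^n$ for the intended image $Pe_1$, giving $q^n$ choices. For each $v$: (i) use canonical forms for symmetric bilinear forms over $\F_q$ (odd characteristic) to decide in polynomial time whether $M_v^{\vA}$ is congruent to $B_1$, and if so produce some $P_0\in\GL(n,q)$ with $P_0 e_1=v$ and $\tr{P_0}M_v^{\vA}P_0=B_1$; (ii) write $P=P_0Q$, so that $Q$ lies in the orthogonal group $\{Q:\tr{Q}B_1 Q=B_1\}$, satisfies $Qe_1=e_1$, and must obey the residual equations $B_k=\tr{Q}\,M_{Qe_k}^{\vA'}\,Q$ for $k\ge 2$, where $\vA'=(\tr{P_0}M_{P_0 e_k}^{\vA}P_0)_{k\in[n]}$ is a transformed symmetric tuple; (iii) feed a suitably derived auxiliary pair of symmetric matrix tuples to Theorem~\ref{thm:autometry-algorithms} to enumerate the compatible $Q$'s. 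The genericity analysis below ensures that for average-case $f$ only $q^{O(n)}$ such $Q$'s arise, each verifiable in $\poly(n,\log q)$ time. The total running time is $q^n\cdot q^{O(n)}\cdot\poly(n,\log q)=q^{O(n)}$.

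\textbf{Genericity analysis and main obstacle.} The heart of the proof is a technical lemma (presumably Lemma~\ref{lem:tech}) asserting that for all but a $1/q^{\Omega(n)}$ fraction of $f$, the pencil $\{M_c^{\vA}\}_c$ is rigid enough that $M_v^{\vA}$ is non-degenerate for almost every $v$ and the set of $Q$'s arising in step (ii) has size at most $q^{O(n)}$. I expect this lemma to be the main obstacle. The union-bound argument of \cite{LQ17,BLQW20} is calibrated to random \emph{alternating} matrix spaces under the \emph{decoupled} $\GL(n,q)\times\GL(m,q)$ action, and it must be re-derived in two directions: first, for random \emph{symmetric} matrices, whose rank and stabilizer statistics differ substantially from alternating ones (for instance, alternating matrices are forced into even rank, which changes the combinatorics of singular loci); and second, for the coupled diagonal action, which removes one entire factor of $\GL$-averaging from the denominators of the relevant ``bad-event'' probability bounds. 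Carrying both adaptations through while keeping the bad fraction below $1/q^{\Omega(n)}$ is the crux of the argument.
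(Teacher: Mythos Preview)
Your reformulation via symmetric slices is correct and the instinct to follow \cite{LQ17,BLQW20} is right, but there is a genuine gap at step~(iii). After guessing only $v=Pe_1$ and normalizing by $P_0$, the residual unknown $Q$ must satisfy $B_k=\tr{Q}\,M^{\vA'}_{Qe_k}\,Q$ for $k\ge 2$: the diagonal coupling is \emph{still present}, since $Q$ appears both in the congruence and inside the linear combination $M_{Qe_k}$. Theorem~\ref{thm:autometry-algorithms} computes only $\{R:\tr{R}\vC R=\vD\}$ for a \emph{fixed} pair of tuples, so there is no ``suitably derived auxiliary pair'' to which it applies. And the one constraint you do have in uncoupled form, namely $\tr{Q}B_1Q=B_1$ with $Qe_1=e_1$, cuts out essentially an orthogonal group of size $q^{\Theta(n^2)}$; no genericity hypothesis on $f$ can shrink this to $q^{O(n)}$.

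The paper breaks the coupling by a different decomposition. Fix a constant $r$ (in fact $r=8$) and enumerate not one column of $T$ but the images of $e_1,\dots,e_r$ \emph{together with} the image of the complementary subspace $\langle e_{r+1},\dots,e_n\rangle$; this costs $q^{O(rn)}=q^{O(n)}$. The residual unknown is then forced into block-diagonal form $T_2=\begin{bmatrix}I_r&0\\0&R\end{bmatrix}$. Because $T_2$ fixes $x_1,\dots,x_r$ pointwise, for each $i\in[r]$ the quadratic piece $\sum_{r<j\le k}\alpha_{i,j,k}x_jx_k$ is carried to another such quadratic form by a pure change of the variables $x_{r+1},\dots,x_n$. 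One obtains $r$ symmetric $\ell\times\ell$ matrices $C_1,\dots,C_r$ (with $\ell=n-r$) on which $R$ acts by ordinary congruence $C_i\mapsto\tr{R}C_iR$, with no linear-combination term. Now Theorem~\ref{thm:autometry-algorithms} applies directly, and Lemma~\ref{lem:tech} needs only to show $|\Aut(\vC)|\le q^{\ell}$ for a random $\vC\in\S(\ell,q)^8$---a straight symmetric-matrix analogue of the alternating bound in \cite{BLQW20}, with no coupled-action complication remaining. The idea your proposal is missing is precisely this block-diagonal decoupling via a constant-size flag; once it is in place, the second adaptation you flag (handling the coupled diagonal action in the union bound) simply does not arise.
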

\begin{proof}
Let $r$ be a constant to be determined later on, and suppose $n$ is sufficiently 
larger than $r$. Our goal is to find $T\in\GL(n, q)$, such that $f=g\circ T$. 

The algorithm consists of two main steps. Let us first give an overview of the two 
steps.

In the first step, we show that there exists a set of 
at most $q^{O(rn)}$-many $T_1\in\GL(n, 
q)$, such that every $T\in\GL(n, q)$ can be written as $T_1T_2$, 
where $T_2$ is of the form {\small \begin{equation}\label{eq:T_2}
\begin{bmatrix}
I_r & 0 \\
0 & R
\end{bmatrix}.
\end{equation} }
Furthermore, 
such $T_1$ can be enumerated in time $q^{O(rn)}$. We then set 
$g_1=g\circ T_1$. 

In the second step, we focus on searching for $T_2$ such that 
$f=g_1\circ T_2$. The key observation is that those $T_2$ as in 
Equation~\ref{eq:T_2} leave $x_i$, $i\in[r]$, invariant, and send $x_j$, 
$j\in[r+1, n]$, to a linear combination of $x_k$, $k\in[r+1, n]$. It follows that 
for any fixed $i\in[r]$, $T_2$ sends $\sum_{r+1\leq j\leq k\leq n} 
a_{i,j,k}x_ix_jx_k$ to a linear combination of $x_ix_jx_k$, 
$r+1\leq j\leq k\leq n$. 
We will use this observation to show that for a random $f$, the 
number of 
$T_2$ satisfying $f=g_1\circ T_2$ is upper bounded by $q^n$ with high probability. 
Furthermore, such 
$T_2$, if they exist, can be enumerated efficiently. This allows us to go over all 
possible $T_2$ and test if $f=g_1\circ T_2$.

\paragraph{The first step.} We show that there exist at most $q^{O(rn)}$-many 
$T_1\in\GL(n, q)$, such that any $T\in\GL(n, q)$ can be written as $T_1T_2$ where 
$T_2$ is of the form as in Equation~\ref{eq:T_2}.

Recall that $e_i$ is the $i$th standard basis vector. Let $E_r=\langle e_1, \dots, 
e_r\rangle$, and let $F_r=\langle e_{r+1}, \dots, e_n\rangle$. Suppose for 
$i\in[r]$, $T(e_i)=u_i$, and $T(F_r)=V\leq \F_q^n$. Let $T_1$ be any invertible 
matrix that 
satisfies $T_1(e_i)=u_i$, and $T_1(F_r)=V$. Let $T_2=T_1^{-1}T$. Then $T_2$ 
satisfies that for $i\in[r]$, $T_2(e_i)=e_i$, and $T_2(F_r)=F_r$. In other words, 
$T_2$ is of the form in Equation~\ref{eq:T_2}.

We then need to show that these $T_1$ can be enumerated in time $q^{O(rn)}$. 

Recall that $T_1$ is determined by the images of $e_i$, $i\in[r]$, and 
$F_r\leq\F_q^n$. So we first enumerate matrices of the form 
$\begin{bmatrix}
u_1 & \dots & u_r & e_{r+1} & \dots & e_n
\end{bmatrix}$, where $u_i\in \F_q^n$ are linearly independent. 
We then need to enumerate the possible images of $F_r$. Let 
$U=\langle u_1, \dots, u_r\rangle$. Then the image of $F_r$ is a complement 
subspace of $U$. It is well-known that the number of complement subspaces of a 
dimension-$r$ space 
is 
$\sim q^{r(n-r)}$. 
To enumerate all complement subspaces of $U$, first compute one complement 
subspace $V=\langle v_1, \dots, v_{n-r}\rangle$. Then it is easy to verify that, 
when going over $A=(a_{i,j})_{i\in[r], j\in[n-r]}\in\M(r\times (n-r), q)$, 
$\langle v_j+\sum_{i\in[r]}a_{i,j}u_i : j\in[n-r]\rangle$ go over all complement 
subspaces of $U$. It follows that we can enumerate matrices $T_1$ of the form 
$\begin{bmatrix}
u_1 & \dots & u_r & v_1+\sum_{i\in[r]}a_{i,1}u_i & \dots & 
v_{n-r}+\sum_{i\in[r]}a_{i,n-r}u_i
\end{bmatrix}$.

\paragraph{The second step.} In Step 1, we computed a set of invertible matrices 
$\{T_1\}\subseteq\GL(n, q)$ such that every $T\in\GL(n, q)$ can be written as 
$T=T_1T_2$ where $T_2=\begin{bmatrix}
I_r & 0 \\
0 & R
\end{bmatrix}$. So we set $h:=g\circ T_1$ and focus on finding $T_2$ of the 
above form such that $f=h\circ T_2$. 

Suppose $f=\sum_{1\leq i\leq j\leq k\leq n}\alpha_{i,j,k} x_ix_jx_k$, and 
$h=\sum_{1\leq i\leq j\leq k\leq n}\beta_{i,j,k} x_ix_jx_k$. 
For $i\in[r]$, define $f_i=\sum_{r+1\leq j\leq k\leq n}\alpha_{i,j,k}x_ix_jx_k$. 
Similarly define $h_{i}$. 

The key observation is that, due to the form of $T_2$, 
we have that $f_i=h_{i}\circ T_2$. This is because for $i\in[r]$, $T_2$ sends 
$x_i$ to $x_i$, and for $j\in[r+1, n]$, $T_2$ sends $x_j$ to a linear combination 
of $x_k$, $k\in[r+1, n]$. 

Let $\ell=n-r$. We then rename the variable $x_{r+i}$, $i\in[\ell]$ as $y_i$. Let 
$Y=\{y_1, \dots, 
y_{\ell}\}$. Then from $f$, we define $r$ quadratic forms in $Y$, 
\begin{equation}\label{eq:c_i}
\forall i\in[r],  c_i=\sum_{1\leq j\leq 
k\leq 
\ell}\alpha_{i,j,k}'y_jy_k, \text{ where }
\alpha_{i,j,k}'=\alpha_{i,r+j, r+k}. 
\end{equation}
Correspondingly, we define $r$ 
quadratic forms $d_i=\sum_{1\leq j\leq k\leq \ell}\beta_{i,j,k}'y_jy_k$, 
$i\in[r]$, 
from $g_1$. 

Our task now is to search for the $R\in\GL(\ell, q)$ such that for every 
$i\in[r]$, $c_i=d_i\circ R$. 

To do that, we adopt the classical representation of quadratic forms as symmetric 
matrices. Here we use the assumption that $q$ is odd. Using the classical 
correspondence between quadratic forms and symmetric matrices, from $c_i$ we 
construct 
\begin{equation}\label{eq:C_i}
C_i=\begin{bmatrix}
\alpha_{i,1,1}' & \frac{1}{2} \alpha_{i,1,2}' & \dots & 
\frac{1}{2}\alpha_{i,1,\ell}'\\
\frac{1}{2}\alpha_{i,1,2}' & \alpha_{i,2,2}' & \dots & 
\frac{1}{2}\alpha_{i,2,\ell}'\\
\vdots & \vdots & \ddots & \vdots \\
\frac{1}{2}\alpha_{i,1,\ell}' & \frac{1}{2}\alpha_{i,2,\ell}' & \dots & 
\alpha_{i,\ell,\ell}'
\end{bmatrix}\in \S(\ell, q).
\end{equation}
Similarly define $D_i$ from $d_i$. It is classical that $c_i=d_i\circ 
R$ if and only if $C_i=\tr{R}D_iR$.

Let $\vC=(C_1, \dots, C_r)\in \S(\ell, q)^r$, and $\vD=(D_1, \dots, D_r)\in 
\S(\ell, q)^r$. 
Recall that $\Aut(\vC)=\{ R\in\GL(\ell, \F) : \tr{R}\vC R=\vC\}$, and 
$\Isom(\vC, \vD)=\{ R\in\GL(\ell, \F) : \vC =\tr{R}\vD R\}$. Clearly, $\Isom(\vC, 
\vD)$ 
is a (possibly empty) coset of $\Aut(\vC)$. When $\Isom(\vC, \vD)$ is 
non-empty, 
$|\Isom(\vC, \vD)|=|\Aut(\vC)|$. Our main technical lemma is the following, 
obtained 
by adapting certain results in \cite{LQ17,BLQW20} to the symmetric matrix setting.
Its proof is postponed to Section~\ref{app:lem}. 
\begin{lemma}\label{lem:tech}
Let $\vC=(C_1, \dots, C_8)\in\S(\ell, q)^8$ be a random symmetric matrix tuple. 
Then we have $|\Aut(\vC)|\leq q^\ell$ for all but at most 
$\frac{1}{q^{\Omega(\ell)}}$ fraction of such $\vC$.
\end{lemma}

Given this lemma, we can use Theorem~\ref{thm:autometry-algorithms} to decide 
whether $\vC$ and $\vD$ are isometric, and if so, compute $\Isom(\vC, \vD)$ 
represented as a coset in $\GL(\ell, q)$. By Lemma~\ref{lem:tech}, for all but at 
most $\frac{1}{q^{\Omega(\ell)}}$ fraction of $\vC$, $|\Isom(\vC, \vD)|\leq 
q^\ell\leq q^n$. 
With $\Isom(\vC, \vD)$ as a coset at hand, we can enumerate all 
elements in $\Aut(\vC)$ by the standard recursive closure algorithm \cite{Luks} 
and therefore all elements in $\Isom(\vC, \vD)$. We then either conclude that 
$|\Isom(\vC, \vD)|>q^n$, or have all $\Isom(\vC, \vD)$ at hand. In the former case 
we conclude that $\vC$ does not satisfy the required generic condition. In the 
latter case, we enumerate $R\in \Isom(\vC, \vD)$, and check whether 
$T_2=\begin{bmatrix}
I_r & 0 \\
0 & R
\end{bmatrix}$ is an isomorphism from $f$ to $g_1$.

\paragraph{The algorithm outline.} We now summarise the above steps in the 
following algorithm outline. In the following we assume that $n\gg 8$; otherwise 
we 
can use the brute-force algorithm.
\begin{description}
\item[Input] Cubic forms $f, g\in \F_q[x_1, \dots, 
x_n]$. 
\item[Output] One of the following: (1) ``$f$ does not satisfy the generic 
condition''; (2) ``$f$ and 
$g$ are not isomorphic''; (3) an isomorphism $T\in\GL(n, q)$ sending $g$ to $f$.
\item[Algorithm outline] 
\begin{enumerate}
\item Set $r=8$, and $\ell= n-r$.
\item\label{enum:W} Compute $W=\{T_1\}\subseteq \GL(n, q)$ using the procedure 
described in Step 
1. \\
// {\tt Every $T\in\GL(n, q)$ can be written as $T_1T_2$ where $T_2$ is of the 
form in Equation~\ref{eq:T_2}.}
\item For every $T_1\in W$, do the following:
\begin{enumerate}
\item $h:\gets g\circ T_1$. 
\item For $i\in[\ell]$, $y_i\gets x_{r+i}$.
\item For $i\in[r]$, let $C_i\in \S(\ell, q)$ be defined in Equation~\ref{eq:C_i}. 
Let $D_i\in \S(\ell, q)$ be defined from $h$ in the same way. Let $\vC=(C_1, 
\dots, C_r)$, and $\vD=(D_1, \dots, D_r)$. 
\item\label{enum:R} Use Theorem~\ref{thm:autometry-algorithms} to decide whether 
$\vC$ and $\vD$ 
are isometric. If not, break from the loop. If so, compute one isometry $R$.
\item\label{enum:Aut} Use Theorem~\ref{thm:autometry-algorithms} to 
compute a generating set of 
$\Aut(\vC)$. Use the recursive closure algorithm to enumerate $\Aut(\vC)$. During 
the enumertion, if $|\Aut(\vC)|>q^\ell$, report ``$f$ does not satisfy the generic 
condition.'' Otherwise, we have the whole $\Aut(\vC)$ at hand, which is of size 
$\leq q^\ell$. 
\item\label{enum:verify} Given $R$ from Line~\ref{enum:R} and $\Aut(\vC)$ from 
Line~\ref{enum:Aut}, 
the whole set $\Isom(\vC, \vD)$ can be computed. For every $R\in \Isom(\vC, \vD)$, 
check whether $T_2=\begin{bmatrix}
I_r & 0 \\
0 & R
\end{bmatrix}$ sends $h$ to $f$. If so, return $T=T_1T_2$ as an isomorphism 
sending $g$ to $f$. 
\end{enumerate}
\item Return that ``$f$ and 
$g$ are not isomorphic''.
\end{enumerate}
\end{description}

\paragraph{Correctness and timing analyses.} The correctness of the algorithm 
relies on the simple fact that if $f$ satisfies the genericity condition, and $f$ 
and $g$ are isomorphic via some $T\in\GL(n, q)$, then this $T$ can be decomposed 
as $T_1T_2$ for some $T_1\in W$ from Line~\ref{enum:W}. Then by the analysis in 
Step 2, $T_2=\begin{bmatrix}
I_r & 0 \\
0 & R
\end{bmatrix}$ where $R\in\Isom(\vC, \vD)$. When $f$ satisfies the genericity 
condition, $\Isom(\vC, \vD)$ will be enumerated, so this $R$ will surely be 
encountered. 

To estimate the time complexity of the algorithm, note that $|W|\leq q^{O(rn)}$, 
and $|\Isom(\vC, \vD)|\leq q^\ell=q^{n-r}$. As other steps are performed in time 
$\poly(n, m, q)$, enumerating over $W$ and $\Isom(\vC, \vD)$ dominates the 
time complexity. Recall that $r=8$. So the total time complexity is upper bounded 
by $q^{O(n)}$.
\end{proof}

\subsection{Proof of the remaining cases of Theorem~\ref{thm:main}}\label{app:main}

Given Theorem~\ref{thm:cubic_form}, we can complete the proof of 
Theorem~\ref{thm:main} easily.
\begin{proof}
\paragraph{Cubic forms over fields of characteristic $2$.}
In Theorem~\ref{thm:cubic_form} we solved the case for cubic forms over fields 
of odd orders. 
We now consider cubic forms over fields of characteristic $2$. 

In this case, one difficulty is that we cannot use the correspondence between 
quadratic forms and symmetric matrices as used in Equation~\ref{eq:C_i}. Still, 
this difficulty can be overcome as follows. Let $f=\sum_{1\leq i\leq j\leq k\leq 
n}\alpha_{i,j,k}x_ix_jx_k$ where $\alpha_{i,j,k}\in\F_q$, $q$ is a power of $2$. 
We follow the proof strategy of Theorem~\ref{thm:cubic_form}. Step 1 stays 
exactly the same. In Step 2, we have $f$ and $g_1$, and the question is to look 
for $T_2=\begin{bmatrix}
I_r & 0 \\
0 & R
\end{bmatrix}$ such that $f=g_1\circ T_2$. We still consider the quadratic forms 
$c_i=\sum_{1\leq j\leq 
k\leq \ell}\alpha_{i,j,k}y_jy_k$ for $i\in[r]$. Now note that 
$(\sum_{j\in[\ell]}\beta_jy_j)^2=\sum_{j\in[\ell]}\beta_j^2y_j^2$ over fields of 
characteristic $2$. So the monomials $y_j^2$ do not contribute to $y_jy_k$ for 
$j\neq k$ under linear transformations. It follows that we can restrict our 
attention to $c_i'=\sum_{1\leq j< k\leq \ell}\alpha_{i,j,k}'y_jy_k$ for $i\in[r]$, 
and define alternating matrices \begin{equation}\label{eq:C_i_prime}
C_i=\begin{bmatrix}
0 & \alpha_{i,1,2}' & \dots & \alpha_{i,1,\ell}'\\
\alpha_{i,1,2}' & 0 & \dots & \alpha_{i,2,\ell}'\\
\vdots & \vdots & \ddots & \vdots \\
\alpha_{i,1,\ell}' & \alpha_{i,2,\ell}' & \dots & 0
\end{bmatrix}
\end{equation} for $i\in [r]$ to get $\vC\in\Lambda(\ell, q)^r$. Note that $C_i$ 
is alternating because we work over fields of characteristic $2$. Similarly 
construct 
$\vD\in \Lambda(\ell, q)^r$ from $g_1$. It can then be verified that, for 
$T_2=\begin{bmatrix}
I_r & 0 \\
0 & R
\end{bmatrix}$ to be an isomorphism from $g_1$ to $f$, it is necessary that $R$ is 
an isometry from $\vD$ to $\vC$. We then use \cite[Proposition 12]{BLQW20}, which 
is the alternating matrix version of our Lemma~\ref{lem:tech}. That proposition 
ensures that for $r=20$, all but at most $\frac{1}{q^{\Omega(\ell)}}$ fraction of 
$\vC$ has $|\Aut(\vC)|\leq q^\ell$. This explains how the first difficulty is 
overcome. 

However, there is a second difficulty, namely 
Theorem~\ref{thm:autometry-algorithms} do not apply to fields of characteristic 
$2$. We sketch how to overcome this difficulty here. The key is to look into the 
proof of 
\cite[Proposition 12]{BLQW20}, which in fact ensures that $\Adj(\vC)=\{(A, E)\in 
\M(\ell, q)\oplus \M(\ell, q) \mid \tr{A}\vC=\vC E\}$ is of size $\leq q^\ell$ for 
random $\vC$. Note that $\Adj(\vC)$ is a linear space and a linear basis of 
$\Adj(\vC)$ can be solved efficiently. Therefore, replacing $\Aut(\vC)$ with 
$\Adj(\vC)$ and $\Isom(\vC, \vD)$ with $\Adj(\vC, \vD)=\{(A, E)\in \M(\ell, 
q)\oplus \M(\ell, q) \mid \tr{A}\vC=\vD E\}$, we can proceed as in the proof of 
Theorem~\ref{thm:cubic_form}. The interested readers may refer to 
\cite{BLQW20} for the details.

\paragraph{Degree-$d$ forms.} Let us then consider degree-$d$ forms. In this case, 
we 
follow the proof of Theorem~\ref{thm:cubic_form}. Step 1 stays exactly the 
same. In Step 2, instead of $\sum_{1\leq j\leq k\leq n}\alpha_{i,j,k}x_ix_jx_k$ 
for $i\in[r]$, we work with $\sum_{1\leq j\leq k\leq 
n}\alpha_{i,j,k}x_i^{d-2}x_jx_k$, noting that matrices in the form in 
Equation~\ref{eq:T_2} preserve the set of monomials $\{x_i^{d-2}x_jx_k\}$. Then 
for odd $q$ case, construct symmetric matrices as in Equation~\ref{eq:C_i} 
and proceed as in the rest of Theorem~\ref{thm:cubic_form}. For the even $q$ 
case, construct alternating matrices as in Equation~\ref{eq:C_i_prime}, and 
procees as described above. 

\paragraph{Degree-$d$ polynomials.} We now consider degree-$d$ polynomials. In 
this case, we can single out the degree-$d$ piece and work as in degree-$d$ form 
case. The only change is that in the verification step, we need to take into 
account the monomials of degree $<d$ as well. 

This concludes the proof of Theorem~\ref{thm:main}.
\end{proof}

\subsection{Proof of Lemma~\ref{lem:tech}}\label{app:lem}

Recall that $\vC=(C_1, \dots, C_8)\in \S(\ell, q)^8$ is a tuple of random 
symmetric matrices, and $\Aut(\vC)=\{R\in\GL(\ell, q) : \tr{R}\vC R=\vC\}$. Our 
goal is to prove that $|\Aut(\vC)|\leq q^\ell$ for all but 
at most $\frac{1}{q^{\Omega(\ell)}}$ fraction of random $\vC$. 

Let $\Adj(\vC):=\{(R, S)\in \M(\ell, q)\oplus \M(\ell, q) : \tr{R}\vC=\vC S\}$. It 
is clear that $|\Aut(\vC)|\leq|\Adj(\vC)|$. We will in fact prove that with high 
probability, $|\Adj(\vC)|\leq q^\ell$. The proof of the following mostly follows 
the proofs for general matrix spaces as in \cite{LQ17} and alternating matrix 
spaces as in \cite{BLQW20}.

To start with, we make use the following 
result from \cite{LQ17}. We say that $\vD=(D_1, \dots, D_r)\in\M(\ell, q)^r$ is 
\emph{stable}\footnote{
Note 
that the stable notion in \cite{LQ17} deals with a more general setting when the 
matrices are not necessarily square. Our definition here coincides with the one in 
\cite{LQ17} when restricting to square matrices. }, if for any $U\leq \F_q^\ell$, 
$1\leq \dim(U)\leq \ell-1$, 
$\dim(\vD(U))>\dim(U)$, where $\vD(U)=\langle \cup_{i\in[r]}D_i(U)\rangle$, and 
$D_i(U)$ denotes the image of $U$ under $D_i$. 
\begin{claim}[{\cite[Prop. 10 in the arXiv version]{LQ17}}]\label{claim:LQ17}
If $\vD\leq\M(\ell, q)$ is stable, then $|\Adj(\vD)|\leq q^\ell$.
\end{claim}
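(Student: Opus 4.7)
The plan is to bound $\dim_{\F_q} \Adj(\vD) \leq \ell$ by proving that $\Adj(\vD)$ is a finite field extension of $\F_q$ acting faithfully on $\F_q^\ell$. First I would observe that $\Adj(\vD)$ is an associative unital $\F_q$-algebra under the twisted product $(R_1, S_1)(R_2, S_2) := (R_1 R_2, S_2 S_1)$, with identity $(I,I)$. Two nondegeneracy consequences of stability are used throughout: (i) for any hyperplane $U \leq \F_q^\ell$ stability forces $\dim \vD(U) \geq \ell$, so $\vD(U) = \F_q^\ell$ and in particular $\im \vD = \F_q^\ell$; (ii) $\bigcap_i \ker D_i = 0$, else this intersection would violate stability. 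Together, (i) and (ii) make both projections $\Adj(\vD) \to \M(\ell, q)$ injective: $\tr{R}\vD = 0$ combined with $\im\vD = \F_q^\ell$ forces $R = 0$, and $\vD S = 0$ combined with $\bigcap_i \ker D_i = 0$ forces $S = 0$.

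The heart of the argument is to show $\Adj(\vD)$ contains no nonzero nilpotent elements. Suppose $(N', N) \in \Adj(\vD)$ with $N, N'$ nilpotent matrices and $N \neq 0$. The relation $\tr{N'} D_i = D_i N$ yields
\[
\vD(\ker N) \subseteq \ker \tr{N'}, \qquad \vD(\im N) \subseteq \im \tr{N'}.
\]
Since $N$ is nonzero and nilpotent, both $\dim \ker N$ and $\dim \im N$ lie in $[1, \ell - 1]$, so stability applies to each. The first inclusion gives $\ell - \rank N' = \dim \ker \tr{N'} > \dim \ker N = \ell - \rank N$, hence $\rank N > \rank N'$; the second gives $\rank N' = \dim \im \tr{N'} > \rank N$, a contradiction. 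So $N = 0$, and then $\tr{N'} \vD = 0$ combined with $\im \vD = \F_q^\ell$ forces $N' = 0$.

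To rule out nontrivial idempotents I would use a primary-decomposition argument over $\F_q$. For any $(R, S) \in \Adj(\vD)$ and any polynomial $p \in \F_q[X]$, the intertwining $p(\tr{R}) D_i = D_i p(S)$ sends the primary component $W_p(S)$ of $S$ into the primary component $V_p(\tr{R})$ of $\tr{R}$, so $\vD(W_p(S)) \subseteq V_p(\tr{R})$. If $S$ had at least two distinct primary components, each $W_p$ would satisfy $1 \leq \dim W_p \leq \ell - 1$; stability would give $\dim V_p(\tr{R}) > \dim W_p(S)$ for each such $p$, summing to $\ell = \sum_p \dim W_p(S) < \sum_p \dim V_p(\tr{R}) \leq \ell$, a contradiction. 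So the characteristic polynomial of $S$ is a prime power; the same conclusion holds for $R$. Applied to an idempotent $(E, F)$, whose characteristic polynomials split as $X^a (X - 1)^b$, this forces $E, F \in \{0, I\}$; combined with injectivity of the projections (and $\vD \neq 0$), the only idempotents of $\Adj(\vD)$ are $(0, 0)$ and $(I, I)$.

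Combining these steps, $\Adj(\vD)$ is a finite-dimensional associative $\F_q$-algebra that is indecomposable (no nontrivial idempotents) and has trivial Jacobson radical (no nonzero nilpotents), so by Wedderburn--Artin together with Wedderburn's little theorem it is a finite field $\F_{q^d}$. The faithful action on $\F_q^\ell$ via the second projection promotes $\F_q^\ell$ to an $\F_{q^d}$-vector space, forcing $d \mid \ell$ and hence $d \leq \ell$; therefore $|\Adj(\vD)| = q^d \leq q^\ell$. The main technical obstacle is the nilpotent step, where one must invoke stability on both $\ker N$ and $\im N$ simultaneously to extract the contradictory inequalities $\rank N > \rank N'$ and $\rank N' > \rank N$; the primary-decomposition argument is a slick consequence of the intertwining identity but relies on working with $\F_q$-rational primary components so that stability applies directly without any base-change subtleties.
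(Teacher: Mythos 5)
Your proof is correct: the twisted product on $\Adj(\vD)$, the two nondegeneracy consequences of stability, the rank contradiction for nilpotent pairs, the primary-component argument excluding nontrivial idempotents, and the final Wedderburn--Artin/Wedderburn step all check out. Note, however, that the paper itself does not prove this claim at all --- it quotes it from LQ17 (Prop.~10 of the arXiv version), whose argument likewise runs through showing that the adjoint algebra of a stable tuple is a finite field acting on $\F_q^\ell$; so your write-up is a self-contained reconstruction rather than a different theorem. Two observations. First, your argument can be substantially streamlined: the rank computation in your ``nilpotent'' step never uses nilpotency. For any $(R,S)\in\Adj(\vD)$ with $S\neq 0$ singular, both $\ker S$ and $\im S$ have dimension in $[1,\ell-1]$, and the inclusions $\vD(\ker S)\subseteq\ker \tr{R}$ and $\vD(\im S)\subseteq\im \tr{R}$ give $\rank S>\rank R$ and $\rank R>\rank S$ simultaneously. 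Hence every nonzero element of $\Adj(\vD)$ has $S$ invertible; then $\tr{R}\vD=\vD S$ forces $\im\tr{R}=\F_q^\ell$, so $R$ is invertible as well and $(R^{-1},S^{-1})\in\Adj(\vD)$. Thus $\Adj(\vD)$ is already a finite division ring, Wedderburn's little theorem makes it a field, and your entire primary-decomposition/idempotent step and the appeal to Wedderburn--Artin become unnecessary. Second, two small points worth tightening: with your twisted product the second projection $(R,S)\mapsto S$ is an \emph{anti}-homomorphism, which is harmless only because the algebra turns out to be commutative (the first projection is multiplicative and avoids the issue); and the argument implicitly assumes $\ell\ge 2$ (your hyperplane step needs $\ell-1\ge 1$; for $\ell=1$ the stability condition is vacuous and the zero tuple violates the stated bound), which is the regime in which the claim is actually invoked.
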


Therefore, we turn to show that a random $\vC\in\S(\ell, q)^8$ is stable with high 
probability. This was shown for random matrix tuples in $\M(\ell, q)^4$ in 
\cite{LQ17}, and random alternating matrix tuples in $\Lambda(\ell, q)^{16}$ in 
\cite{BLQW20}. The proof strategy for the symmetric case is similar, but certain 
differences between the symmetric and alternating matrices do arise, as reflected 
in the following.

Our goal is to show that 
$$
\Pr[\vC\in\S(\ell, q)^8\text{ is not stable}]\leq \frac{1}{q^{\Omega(\ell)}}.
$$
By definition, we have
$$
\Pr[\vC\in\S(\ell, q)^8\text{ is not stable}]=\Pr[\exists U\leq\F_q^\ell, 
1\leq\dim(U)\leq n-1, \dim(U)\geq\dim(\vC(U))].
$$
By union bound, we have 
\begin{eqnarray*}
& & \Pr[\exists U\leq\F_q^\ell, 1\leq\dim(U)\leq n-1, \dim(U)\geq\dim(\vC(U))]\\
&\leq  & \sum_{U\leq\F_q^\ell, 1\leq\dim(U)\leq n-1}\Pr[\dim(U)\geq\dim(\vC(U))].
\end{eqnarray*}

For $d\in[\ell-1]$, let $E_d=\langle e_1, \dots, e_d\rangle$. Let 
$U\leq\F_q^\ell$, 
$\dim(U)=d$. We claim that 
$\Pr[\dim(U)\geq\dim(\vC(U))]=\Pr[\dim(E_d)\geq\dim(\vC(E_d))]$. To see this, note 
that there exists $P\in\GL(\ell, q)$ such that $P(E_d)=U$. Then observe that 
$\dim((\tr{P}\vC P)(E_d))=\dim(\vC(U))$. It follows that $\dim(\vC(U))\leq\dim(U)$ 
if and only if $\dim((\tr{P}\vC P)(E_d))\leq\dim(E_d)$. The claim then follows, by 
observing that the map $\S(\ell, q)^r\to \S(\ell, q)^r$ via $\tr{P}\cdot P$ is 
bijective. As a consequence, for any $d\in[n-1]$, we have
$$
\sum_{U\leq\F_q^\ell, \dim(U)=d}\Pr[\dim(U)\geq\dim(\vC(U))]
=\gbinom{\ell}{d}{q}\cdot \Pr[\dim(\vC(E_d))\leq d].$$

Let $C_i^d$ be the submatrix of $C_i$ consisting of the first $d$ columns of 
$C_i$, and let $C^d=\begin{bmatrix}
C_1^d & \dots & C_r^d
\end{bmatrix}\in\M(\ell\times rd, q)$. Then $\dim(\vC(E_d))=\rk(C^d)$. 
Note that each $C_i^d$ is of the form 
\begin{equation}\label{eq:C_i^d}
\begin{bmatrix}
C_{i,1}^d\\
C_{i,2}^d
\end{bmatrix}
\end{equation}
where $C_{i,1}^d$ is a random symmetric matrix of size $d\times 
d$, and $C_{i,2}^d$ is a random matrix of size $(\ell-d)\times d$. 

We then need to prove the following result, from which our desired result would 
follow. Here we set $r=8$.
\begin{proposition}\label{prop:tech}
Let $C^d\in\M(\ell\times 8d, q)$ be in the form above. Then we have 
$\gbinom{\ell}{d}{q}\cdot\Pr[\rk(C^d)\leq d]\leq \frac{1}{q^{\Omega(\ell)}}$, 
\end{proposition}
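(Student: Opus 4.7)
The plan is to adapt the union-bound arguments of \cite{LQ17,BLQW20} (originally for general and alternating matrix tuples, respectively) to random symmetric matrix tuples. First, I would rewrite the event $\rk(C^d)\le d$ as the existence of a subspace $U\le\F_q^\ell$ with $\dim U = \ell-d$ and $\tr{u}\,C^d = 0$ for every $u\in U$. Since each $C_i$ is symmetric, this is equivalent to $C_i U\subseteq F_r$ for every $i\in[8]$, where $F_r=\langle e_{d+1},\dotsc,e_\ell\rangle$. A union bound over $U$ together with the independence of $C_1,\dotsc,C_8$ then yields
\[
\Pr[\rk(C^d)\le d]\;\le\;\sum_{U\le\F_q^\ell,\,\dim U=\ell-d}\Pr[C_1U\subseteq F_r]^{\,8}.
\]

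Next, I would stratify this sum by $s:=\dim(U\cap E_d)\in[0,\min(d,\ell-d)]$. For $U$ in the stratum with parameter $s$, choosing a basis of $\F_q^\ell$ adapted to $U\cap E_d$ and writing $C_1$ in the resulting block form shows that $C_1U\subseteq F_r$ is equivalent to exactly $d(\ell-d)-\binom{s}{2}$ linearly independent conditions on the $\binom{\ell+1}{2}$ free entries of $C_1$---the correction $\binom{s}{2}$ coming from the symmetry of the top-left block of $C_1$ relative to $U\cap E_d$---so $\Pr[C_1U\subseteq F_r]=q^{-d(\ell-d)+\binom{s}{2}}$. A standard Gaussian-binomial count gives $\gbinom{d}{s}{q}\gbinom{\ell-d}{s}{q}\,q^{(\ell-d-s)(d-s)}\le 16\,q^{d(\ell-d)-s^2}$ subspaces $U$ in the stratum, so this stratum contributes at most $16\,q^{-7d(\ell-d)+3s^2-4s}$ to $\Pr[\rk(C^d)\le d]$.

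Finally, multiplying by $\gbinom{\ell}{d}{q}\le 4\,q^{d(\ell-d)}$ and summing over the $O(\ell)$ values of $s$, the overall exponent $-6d(\ell-d)+3s^2-4s$ is maximized at $s=\min(d,\ell-d)$; a short case split on $d\le\ell/2$ versus $d>\ell/2$ shows this maximum is at most $-\Omega(\ell)$ for every $d\in[\ell-1]$, giving $\gbinom{\ell}{d}{q}\cdot\Pr[\rk(C^d)\le d]\le q^{-\Omega(\ell)}$ as required. The hardest part is the exponent bookkeeping in the non-generic strata $s>0$, where the symmetry of the top block of $C_1$ produces the $\binom{s}{2}$ correction; this is the symmetric analogue of the calculation in \cite[Prop.~12]{BLQW20} for alternating matrices, with $\binom{d+1}{2}$ free entries of a symmetric $d\times d$ block in place of $\binom{d}{2}$, exactly the ``certain differences'' noted after Claim~\ref{claim:LQ17}. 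The value $r=8$ is tight for this calculation: it produces a margin of $(r-2)d(\ell-d)=6d(\ell-d)$, just enough to dominate the loss $3s^2-4s\le 3\min(d,\ell-d)^2$ coming from the non-generic strata.
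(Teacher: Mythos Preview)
Your argument is correct, but it proceeds by a genuinely different route from the paper. The paper does not redo the union bound for symmetric matrices at all: instead it proves a ``merging'' claim (Claim~\ref{claim:merge}) showing that from two independent uniform symmetric $d\times d$ matrices one can linearly manufacture a single uniform matrix in $\M(d,q)$. Applying this pairwise to $(C_{2i-1}^d,C_{2i}^d)$ yields four columns blocks $C_i'^d$ whose concatenation $C'^d$ is a genuinely uniform matrix in $\M(\ell\times 4d,q)$; since $\rk(C^d)\ge\rk(C'^d)$, the paper then simply invokes the off-the-shelf bound for general random matrices (Proposition~\ref{prop:4_random} from \cite{LQ17}). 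Your approach, by contrast, keeps the symmetric structure visible throughout: you union-bound over left-annihilating subspaces $U$, stratify by $s=\dim(U\cap E_d)$, and track the $\binom{s}{2}$ defect that symmetry introduces in the independence count. This is the symmetric-matrix analogue of the direct calculation in \cite{BLQW20}, and it is more self-contained (no black-box call to \cite{LQ17}), at the cost of more exponent bookkeeping. Both arguments are short; the paper's buys modularity, yours buys transparency about where the symmetry actually bites.

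One small correction: your closing remark that $r=8$ is ``tight'' for this calculation is not right. In your own bound the loss term is $\tfrac{r-2}{2}s^2-\tfrac{r}{2}s$ and the margin is $(r-2)d(\ell-d)$; since $s^2\le\min(d,\ell-d)^2\le d(\ell-d)$, the margin dominates already for $r=3$. The paper needs $r=8$ only because its merging trick consumes symmetric blocks in pairs and then appeals to the $r=4$ result for general matrices.
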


To prove Proposition~\ref{prop:tech}, we utilise 
the following result from \cite{LQ17}.
\begin{proposition}[{\!\!\cite[Proposition 20]{LQ17}}]\label{prop:4_random}
Let $D\in\M(\ell\times 4d, q)$ be a random matrix, where $1\leq d\leq \ell-1$. 
Then 
$\gbinom{\ell}{d}{q}\cdot\Pr[\rk(D)\leq d]\leq \frac{1}{q^{\Omega(\ell)}}$.
\end{proposition}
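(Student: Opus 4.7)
The plan is a direct counting argument. To bound $\Pr[\rk(D) \leq d]$, I would cover the set of rank-deficient matrices by unions indexed by the column span. Specifically, any $D \in \M(\ell \times 4d, q)$ with $\rk(D) \leq d$ has its column space contained in some $d$-dimensional subspace $V \leq \F_q^\ell$. Once $V$ is fixed, each of the $4d$ columns must lie in $V$, giving at most $(q^d)^{4d} = q^{4d^2}$ such matrices. Summing over the $\gbinom{\ell}{d}{q}$ choices of $V$ and dividing by the total count $q^{4d\ell}$ yields
\[
\Pr[\rk(D) \leq d] \ \leq\ \gbinom{\ell}{d}{q} \cdot q^{-4d(\ell - d)}.
\]

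Next I would multiply by $\gbinom{\ell}{d}{q}$ and apply the standard estimate $\gbinom{\ell}{d}{q} \leq C_q \cdot q^{d(\ell - d)}$, where $C_q = \prod_{i \geq 1}(1 - q^{-i})^{-1}$ is an absolute constant depending only on $q$. This gives
\[
\gbinom{\ell}{d}{q} \cdot \Pr[\rk(D) \leq d] \ \leq\ C_q^2 \cdot q^{-2d(\ell - d)}.
\]
Over the range $1 \leq d \leq \ell - 1$, the function $d \mapsto d(\ell - d)$ is minimized at the endpoints and equals $\ell - 1$ there, so the product is bounded by $C_q^2 \cdot q^{-2(\ell - 1)} = q^{-\Omega(\ell)}$, as claimed.

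The argument is largely bookkeeping, but two points deserve attention. First, a naive estimate such as $\gbinom{\ell}{d}{q} \leq 2^d \cdot q^{d(\ell-d)}$, obtained from $q^i - 1 \geq q^i/2$, loses too much and fails to yield $q^{-\Omega(\ell)}$ in the extreme case $q = 2$, $d = \ell - 1$; using the sharper $q$-Pochhammer estimate with the constant $C_q$ (which is finite for every $q \geq 2$) handles all cases uniformly. Second, the constant $4$ in the hypothesis ``$4d$ columns'' is essential: the exponent $-4d(\ell - d)$ from the probability bound must strictly beat the $+2d(\ell - d)$ exponent that arises from squaring $\gbinom{\ell}{d}{q}$, and this net margin of $2d(\ell - d)$ is precisely what produces the $q^{-\Omega(\ell)}$ decay. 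With only $2d$ columns the two exponents would cancel and the estimate would break down, which explains why the statement requires $4d$.
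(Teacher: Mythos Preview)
Your argument is correct: the union bound over $d$-dimensional column spans, the estimate $\gbinom{\ell}{d}{q} \leq C_q\, q^{d(\ell-d)}$ with the $q$-Pochhammer constant, and the endpoint minimization of $d(\ell-d)$ all go through as written, yielding the claimed $q^{-\Omega(\ell)}$ bound.

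Note, however, that the paper does not actually prove this proposition; it is imported verbatim from \cite[Proposition~20]{LQ17} and used as a black box in the proof of Proposition~\ref{prop:tech}. So there is no ``paper's own proof'' to compare against here. Your self-contained counting argument is exactly the kind of proof one would expect for such a statement, and your remarks about why the constant~$4$ is needed (to leave a margin of $2d(\ell-d)$ after absorbing $\gbinom{\ell}{d}{q}^2$) are on point and in fact clarify why the later Proposition~\ref{prop:tech} requires $r=8$ symmetric matrices: two symmetric matrices are merged via Claim~\ref{claim:merge} into one uniform matrix, so $8$ symmetric blocks yield $4$ uniform blocks, matching the hypothesis here.
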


To use the above result in our setting, however, there is a caveat caused by the 
symmetric structure of $C_{i, 1}^d$ for $i\in[r]$. This is resolved by observing 
the following claim, which basically says that we can simulate one random matrix 
in $\M(d, q)$ using two random symmetric matrices in $\S(d, q)$.
\begin{claim}\label{claim:merge}
Let $X$ and $Y$ be two random symmetric matrices from $\S(d, q)$, i.e. $$
	X=\left[
	\begin{matrix}
	x_{1,1}      & x_{1,2}       & \cdots & x_{1,d}      \\
	x_{1,2}      & x_{2,2}      & \cdots & x_{2,d}      \\
	\vdots & \vdots & \ddots & \vdots \\
	x_{1,d}     & x_{2,d}      & \cdots & x_{d,d}      \\
	\end{matrix}
	\right], 
	Y=\left[
	\begin{matrix}
	y_{1,1}      & y_{1,2}       & \cdots & y_{1,d}      \\
	y_{1,2}      & y_{2,2}     & \cdots & y_{2,d}      \\
	\vdots & \vdots & \ddots & \vdots \\
	y_{1,d}     & y_{2,d}      & \cdots & y_{d,d}     \\
	\end{matrix}
	\right]
	$$  
    Then $$
    Z=\left[
    	\begin{matrix}
    	x_{1,1}+y_{1,2}      & x_{1,2}+y_{1,3}       & \cdots & x_{1,d} 
    	+y_{1,1}     \\
    	x_{1,2}+y_{2,2}      &x_{2,2}+ y_{2,3}      & \cdots & x_{2,d} 
    	+y_{1,2}     \\
    	\vdots & \vdots & \ddots & \vdots \\
    	x_{1,d}+y_{2,d}     & x_{2,d}+y_{3,d}      & \cdots & x_{d,d}+y_{1,d}     
    	\\
    	\end{matrix}
    	\right]
    	$$  
        is a uniformly sampled random matrix in $\M(d, q)$, when $X$ and $Y$ are 
        sampled in uniformly random from $\S(d, q)$.
\end{claim}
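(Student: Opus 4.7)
The plan is to recognize the construction of $Z$ from $(X,Y)$ as an $\F_q$-linear map and reduce the claim to a single algebraic fact. Concretely, define $\phi : \S(d,q) \oplus \S(d,q) \to \M(d,q)$ by $\phi(X,Y) = Z$. Every entry $z_{i,j}$ is an $\F_q$-linear combination of the entries of $X$ and $Y$, so $\phi$ is a homomorphism of finite abelian groups under addition. Since the uniform distribution on the source of a surjective homomorphism of finite abelian groups always pushes forward to the uniform distribution on the target, it suffices to show that $\phi$ is surjective; equivalently, that every $M \in \M(d,q)$ can be written as $X + B(Y)$, where $B : \S(d,q) \to \M(d,q)$ denotes the map ``cyclically shift the columns of the symmetric matrix $Y$ by one position.''

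The dimension count is consistent with this: $\dim_{\F_q}(\S(d,q) \oplus \S(d,q)) = d(d+1)$ while $\dim_{\F_q} \M(d,q) = d^2$, so if $\phi$ is surjective then each fiber has size exactly $q^{d(d+1)-d^2} = q^d$. In particular, the uniform distribution on $\S(d,q) \oplus \S(d,q)$ would then push forward to the uniform distribution on $\M(d,q)$, as claimed. Thus the entire argument reduces to verifying $\S(d,q) + B(\S(d,q)) = \M(d,q)$.

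For surjectivity, I would proceed constructively. Given a target $M \in \M(d,q)$, think of $z_{i,j} = x_{i,j}^{(\mathrm{sym})} + y_{i,(j+1)\bmod d}^{(\mathrm{sym})}$ as a linear system in the upper-triangular entries of $X$ and $Y$. Fix the diagonal entries $y_{k,k}$ of $Y$ as free parameters (a $d$-dimensional choice corresponding to the expected kernel), and then cascade through the remaining equations: the entries of $X$ above the diagonal are determined by entries of $M$ and already-assigned entries of $Y$, and the off-diagonal entries of $Y$ by entries of $M$ and already-assigned entries of $X$. A small example (e.g. $d = 2$ or $d = 3$) makes this bookkeeping transparent and already reveals why the cyclic shift in $B$ is essential.

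The main obstacle is making sure the back-substitution in the previous paragraph never produces a contradiction, i.e.\ that the symmetry constraints on $X$ and $Y$ do not conspire to force a nontrivial linear relation on the entries of $M$. Equivalently, one must verify that $\S(d,q) \cap B(\S(d,q))$ has dimension exactly $d$. This is the only place where the particular shape of the construction in the claim is used: a symmetric $X$ has the symmetry $X_{i,j} = X_{j,i}$ whereas $B(Y)$ for symmetric $Y$ has the ``shifted'' symmetry $B(Y)_{i,j} = B(Y)_{j+1,i-1}$ (indices mod $d$), and these two symmetry relations are compatible only along a $d$-dimensional subspace. Once this dimension is computed, surjectivity follows from dimensions alone, and the claim is established.
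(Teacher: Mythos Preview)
Your approach is correct and will go through, but it is organized differently from the paper's proof. The paper does not argue globally via surjectivity of a single linear map; instead it observes that the $d^2$ entries of $Z$ split into $d$ disjoint blocks, one for each ``wrapped'' antidiagonal $\{(i,j) : i+j \equiv s \pmod d\}$, and that the $x$- and $y$-variables appearing in distinct blocks are disjoint. Each block is then a system of $d$ linear equations in $d+1$ independent variables of rank $d$, so each block is uniform and independent of the others. Your global picture and the paper's block picture are really two views of the same linear map: the antidiagonal decomposition is exactly a block-diagonalization of your $\phi$.

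The one place your outline leaves real work undone is the computation of $\dim\bigl(\S(d,q)\cap B(\S(d,q))\bigr)$. This is precisely where the antidiagonal structure reappears: if $M$ is symmetric and also satisfies the shifted symmetry $M_{i,j}=M_{j+1,i-1}$ (indices mod $d$), then combining the two gives $M_{a,b}=M_{a+1,b-1}$, so $M$ is constant along each wrapped antidiagonal. There are exactly $d$ such antidiagonals, and conversely any matrix constant on them lies in the intersection, so the dimension is $d$ and surjectivity follows. In other words, your ``main obstacle'' is resolved by the very same observation the paper uses as its starting point.

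What each approach buys: the paper's decomposition is more hands-on and makes independence of the blocks immediate, at the cost of some index bookkeeping and a case split on parity. Your formulation is cleaner and more portable (surjective homomorphism of finite abelian groups pushes uniform to uniform), and once the intersection is computed it avoids the parity split entirely.
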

\begin{proof}
Let $z_{i,j}$ be the $(i,j)$th entry of $Z$.
Note that each $x_{i,j}$ (resp. $y_{i,j}$), $i\neq j$, appear exactly twice in 
$Z$ on an antidiagonal $z_{1, i}, z_{2, i-1}, \dots, z_{i-1, 1}, z_{i, n}, z_{i+1, 
n-1}, \dots, z_{n, i+1}$. So we can focus on such an antidiagonal to show that 
when $x_{i,j}$ and $y_{i,j}$ are uniformly sampled from $\F_q$, $z_{i,j}$ are 
also uniformly sampled from $\F_q$. 

Let us first consider the case when $d$ is odd. Let us consider a specific one, 
say $z_{1,1}=x_{1,1}+y_{1,2}$, 
$z_{2,d}=x_{2,d}+y_{1,2}$, \dots, $z_{d, 2}=x_{2,d}+y_{3,d}$. Other antidiagonals 
are of the same structure. It can be verified 
that this is a system of $d$ linear equations in $d+1$ variables of rank $d$. It 
follows that when those $x_{i,j}$ and $y_{k,\ell}$ involved are sampled in uniform 
random from $\F_q$, $z_{i',j'}$ are also in uniformly random distribution.

The case when $d$ is even can be verified similarly. This concludes the proof.
\end{proof}

We are now ready to prove Proposition~\ref{prop:tech}.
\begin{proof}[{Proof of Proposition~\ref{prop:tech}}]
Recall that $C^d=\begin{bmatrix}
C_1^d & \dots & C_8^d
\end{bmatrix}$, where $C_i^d\in\M(\ell\times d, q)$ is of the form in 
Equation~\ref{eq:C_i^d}. For 
$i\in[4]$, let $C_i'^d\in\M(\ell\times d, q)$ be constructed from $C_{2i-1}^d, 
C_{2i}^d$ as in Claim~\ref{claim:merge}, and set $C'^d=\begin{bmatrix}
C_1'^d & \dots & C_4'^d
\end{bmatrix}$. It is clear that $\rk(C^d)\geq \rk(C'^d)$, so 
$\gbinom{\ell}{d}{q}\cdot\Pr[\rk(C^d)\leq 
d]\leq\gbinom{\ell}{d}{q}\cdot\Pr[\rk(C'^d)\leq d]$. By 
Claim~\ref{claim:merge}, $C'^d$ is a random matrix in $\M(\ell\times 4d, q)$. By 
Proposition~\ref{prop:4_random}, $\gbinom{\ell}{d}{q}\cdot\Pr[\rk(C'^d)\leq d]\leq 
\frac{1}{q^{\Omega(\ell)}}$. This concludes the proof.
\end{proof}

\subsection{Trilinear form equivalence and algebra isomorphism}
\label{subsec:adjust}

We present our results on trilinear form equivalence and algebra isomorphism, and 
only sketch the proofs because they mostly follow that for 
Theorem~\ref{thm:cubic_form}. 

\paragraph{Trilinear form equivalence.} 
The trilinear form equivalence problem was stated in 
Section~\ref{subsec:complexity}. In algorithms, a trilinear form $f$ is naturally 
represented as a $3$-way array $\tA=(a_{i,j,k})$ where $a_{i,j,k}=f(e_i,e_j,e_k)$. 
A random trilinear form over $\F_q$ 
denotes the setting when $\alpha_{i,j,k}$ are independently sampled from 
$\F_q$ uniformly at random.

\begin{theorem}\label{prop:trilinear}
Let $f:\F_q^n\times\F_q^n\times\F_q^n\to\F_q$ be 
a random trilinear form, and let $g:\F_q^n\times\F_q^n\times\F_q^n\to\F_q$ be an 
arbitrary trilinear form. There exists a deterministic algorithm that decides 
whether $f$ and $g$ are equivalent in time $q^{O(n)}$, for all but at most 
$\frac{1}{q^{\Omega(n)}}$ fraction of $f$. 
\end{theorem}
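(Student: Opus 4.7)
The plan is to adapt the two-step strategy of Theorem~\ref{thm:cubic_form} to trilinear forms. Represent $f$ and $g$ as $3$-way arrays $\tA, \tB \in \M(n\times n\times n, \F_q)$; an equivalence corresponds to a single $T \in \GL(n, q)$ acting diagonally on all three tensor positions. Step~1 is identical to the cubic form case: enumerate a set $W \subseteq \GL(n, q)$ of size $q^{O(rn)}$ (for a small constant $r$) such that every $T \in \GL(n,q)$ factors as $T_1 T_2$ with $T_1 \in W$ and $T_2 = \begin{bmatrix} I_r & 0 \\ 0 & R \end{bmatrix}$, $R \in \GL(\ell, q)$, $\ell = n - r$. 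This argument concerns only cosets in $\GL(n, q)$ and carries over verbatim.

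For Step~2, fix $T_1$, replace $g$ by $g \circ T_1$, and search for $T_2$ of the above block form. Exploiting the block structure of $T_2$, I would restrict attention to the entries $a_{k, j, j'}$ of $\tA$ with $k \in [r]$ and $j, j' \in [r+1, n]$; these assemble into a tuple $\vC = (C_1, \ldots, C_r) \in \M(\ell, q)^r$, with $\vD$ defined analogously from $g \circ T_1$. Since $T_2$ fixes $e_1, \ldots, e_r$ and acts as $R$ on $\langle e_{r+1}, \ldots, e_n\rangle$, the restricted tuple transforms cleanly as $C_k \mapsto \tr{R} D_k R$, so any valid $R$ lies in $\Isom(\vD, \vC)$. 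In contrast to the cubic case, no symmetric-matrix rework is required here: Claim~\ref{claim:LQ17} and Proposition~\ref{prop:4_random} apply \emph{directly} to tuples of general matrices, and already for $r = 4$ a random $\vC \in \M(\ell, q)^r$ satisfies $|\Adj(\vC)| \leq q^\ell$ with probability $1 - 1/q^{\Omega(\ell)}$. The $C_k$ are jointly uniform because the entries $a_{k,j,j'}$ lie in distinct positions of the $3$-way array, avoiding the symmetry-induced dependencies that made Lemma~\ref{lem:tech} and Claim~\ref{claim:merge} necessary.

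To enumerate $\Isom(\vD, \vC)$, I would split each $C_k$ into its symmetric and alternating parts (for odd $q$) and invoke Theorem~\ref{thm:autometry-algorithms} on the resulting combined tuple, since any isometry of $\vC$ preserves both parts; one coset representative together with a generating set for $\Aut(\vC)$, expanded via Luks' recursive closure, produces all $\leq q^\ell$ candidate $R$. For each, assemble $T = T_1 T_2$ and test $f = g \circ T$ in $\poly(n, \log q)$ time. The overall cost is $q^{O(rn)} \cdot q^\ell \cdot \poly(n, \log q) = q^{O(n)}$. The main obstacle I anticipate is the algorithmic reduction from general-matrix isometry back to the symmetric/alternating setting covered by Theorem~\ref{thm:autometry-algorithms}: the odd-characteristic case is handled by the symmetric--alternating splitting above, but in characteristic~$2$ the splitting degenerates and one must instead work with the linearized $\Adj$-space, exactly as sketched in Section~\ref{app:main} for degree-$d$ forms over fields of characteristic~$2$.
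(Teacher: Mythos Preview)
Your proposal is correct and follows essentially the same route as the paper's proof: Step~1 is carried over verbatim, in Step~2 the $C_i$ are taken as general $\ell\times\ell$ matrices rather than symmetric ones, the generic bound $|\Adj(\vC)|\le q^{\ell}$ for $r=4$ is taken directly from \cite{LQ17}, and the characteristic-$2$ case is deferred to the $\Adj$-based workaround of Section~\ref{app:main}. The one place you add detail beyond the paper is the enumeration of $\Isom(\vC,\vD)$ for general matrix tuples via the symmetric/alternating splitting so that Theorem~\ref{thm:autometry-algorithms} applies; the paper simply says ``proceed exactly as in Theorem~\ref{thm:cubic_form}'' without spelling this out, so your extra care here is justified and the argument goes through.
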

\begin{proof}
To test equivalence of trilinear forms of $f, 
g:\F_q^n\times\F_q^n\times\F_q^n\to\F_q$, an average-case algorithm in time 
$q^{O(n)}$ can be achieved by following the proof of 
Theorem~\ref{thm:cubic_form}. The only difference is that, in Step 2 there, 
instead of symmetric matrices in Equation~\ref{eq:C_i}, we can construct general 
matrices $C_i=(\alpha_{i,j,k}')_{j, k\in [\ell]}$. Then we need a version of 
Lemma~\ref{lem:tech} for general matrices, which is already shown in 
\cite[Proposition 19 and 20]{LQ17}. It says that when $r=4$, a random $\vC\in 
\M(\ell, q)^4$ satisfies that $|\Aut(\vC)|\leq q^\ell$. We then proceed exactly as 
in Theorem~\ref{thm:cubic_form} for odd $q$, and for even $q$ we use the 
technique described in Section~\ref{app:main}. 
\end{proof}

\paragraph{Algebra isomorphism.} Let $V$ be a vector space. An algebra is a 
bilinear map $*:V\times V\to V$. 
This bilinear map $*$ is considered as the product. Algebras most studied are 
those with certain conditions on the product, including unital ($\exists v\in V$ 
such that $\forall u\in V$, $v* u=u$), associative ($(u* v)* w=u* 
(v* w)$), and commutative ($u* v=v* u$). The authors of \cite{AS05,AS06} 
study algebras satisfying these conditions. Here we consider 
algebras without such restrictions. Two algebras $*, \cdot: V\times V\to V$ 
are \emph{isomorphic}, if there exists $T\in\GL(V)$, such that $\forall u, v\in 
V$, 
$T(u)* T(v)=T(u\cdot v)$. 
As customary in computational algebra, an algebra is represented by its structure 
constants, i.e. suppose $V\cong \F^n$, and fix a basis $\{e_1, \dots, e_n\}$. Then 
$e_i* e_j=\sum_{k\in[n]}\alpha_{i,j,k}e_k$, and this $3$-way array 
$\tA=(\alpha_{i,j,k})$ records the structure constants of the algebra with 
product $*$. A random algebra over $\F_q$ denotes the setting when 
$\alpha_{i,j,k}$ are independently sampled from 
$\F_q$ uniformly at random.

\begin{theorem}\label{prop:ai}
Let $f:\F_q^n\times\F_q^n\to \F_q^n$ be 
a random algebra, and let $g:\F_q^n\times\F_q^n\to \F_q^n$ be an 
arbitrary algebra. There exists a deterministic algorithm that decides 
whether $f$ and $g$ are isomorphic in time $q^{O(n)}$, for all but at most 
$\frac{1}{q^{\Omega(n)}}$ fraction of $f$. 
\end{theorem}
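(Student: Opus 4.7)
The plan is to imitate the two-step strategy of Theorem~\ref{thm:cubic_form}, with the matrix-isometry subproblem replaced by a matrix-tuple conjugacy problem arising from the algebra action. Step~1 is identical: enumerate a set $W \subseteq \GL(n, q)$ of size $q^{O(rn)}$ such that every $T \in \GL(n, q)$ factors as $T = T_1 T_2$ with $T_1 \in W$ and $T_2 = \begin{bmatrix} I_r & 0 \\ 0 & R \end{bmatrix}$. For each $T_1 \in W$, set $g_1 = g \circ T_1$ and search for $R \in \GL(\ell, q)$, where $\ell = n - r$, such that $T_2$ sends $g_1$ to $f$.

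For Step~2, the key observation is how the block form of $T_2$ acts on the structure constants. For each $i \in [r]$, let $M^f_i \in \M(\ell, q)$ be the $\ell \times \ell$ submatrix of the ``left multiplication by $e_i$'' map restricted to $F_r = \langle e_{r+1}, \dots, e_n \rangle$, so that $(M^f_i)_{k, j} = \alpha^f_{i, r+j, r+k}$; define $M^{g_1}_i$ analogously. Using the algebra action $\alpha^g_{i,j,k} = \sum_{a,b,c} T_{a,i} T_{b,j} (T^{-1})_{k,c} \alpha^f_{a,b,c}$ together with the fact that $T_2$ fixes $e_i$ for $i \in [r]$, a direct computation shows that if $T_2$ sends $g_1$ to $f$, then $R^{-1} M^f_i R = M^{g_1}_i$ for every $i \in [r]$, together with further linear constraints on $R$ coming from the off-diagonal blocks of the left-multiplication matrices. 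Thus the core algorithmic subproblem is matrix-tuple conjugacy.

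The genericity ingredient is the general-matrix analogue of Lemma~\ref{lem:tech}, namely \cite[Propositions~10 and~20]{LQ17}: for $r = 4$, a random tuple $\vec{M} \in \M(\ell, q)^{r}$ is stable, hence its adjoint space $\Adj(\vec{M}) = \{ (A, B) \in \M(\ell, q)^{2} : A M_i = M_i B \text{ for all } i \}$ has size at most $q^{\ell}$ outside a $1/q^{\Omega(\ell)}$ fraction of inputs. Since a uniformly random algebra $f$ yields a uniformly random tuple $\vec{M}^f$ (the structure constants are unconstrained), and since any $R$ with $R^{-1} M^f_i R = M^{g_1}_i$ contributes a diagonal pair $(R, R) \in \Adj(\vec{M}^f, \vec{M}^{g_1})$, the candidate set for $R$ has size at most $q^{\ell}$ for generic $f$. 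Algorithmically, $\Adj(\vec{M}^f, \vec{M}^{g_1})$ is a linear space computable by solving a linear system; we enumerate its (at most $q^{\ell}$) elements, keep diagonal pairs with invertible $R$, and for each such $R$ verify the remaining linear constraints and check whether the full $T_2$ yields $f = g_1 \circ T_2$. The running time is dominated by $|W| \cdot q^{\ell} = q^{O(n)}$.

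The main obstacle is that the matrix invariant transforms by conjugation rather than by the isometry or equivalence actions of Theorems~\ref{thm:cubic_form} and~\ref{prop:trilinear}, so Theorem~\ref{thm:autometry-algorithms} does not apply directly. This is handled by replacing the isometry group with the adjoint space, exactly as in the characteristic-$2$ treatment within the proof of Theorem~\ref{thm:main} in Section~\ref{app:main}; the genericity analysis itself is in fact simpler than in the symmetric case, since no analogue of Claim~\ref{claim:merge} is needed to manufacture uniformly random matrices from constrained structure constants.
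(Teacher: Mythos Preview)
Your approach is sound and takes a genuinely different route from the paper's. The paper reindexes the algebra tensor so that the \emph{output} slot sits first (writing the algebra in $(\F_q^n)^*\otimes\F_q^n\otimes\F_q^n$) and then runs the trilinear-form algorithm of Theorem~\ref{prop:trilinear} verbatim: fixing $i\in[r]$ in that dual slot leaves two covariant slots, so the extracted matrices transform by the \emph{isometry} action $\tr{R}C_iR$ exactly as before, and only the final verification changes (using $R^{-1}$ on the dual slot). You instead fix an \emph{input} slot, leaving one input and one output slot, which yields the \emph{conjugacy} action $R^{-1}M_i^fR=M_i^{g_1}$. Both choices are valid necessary conditions. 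The paper's choice buys a one-line reduction to Theorem~\ref{prop:trilinear} and direct reuse of Theorem~\ref{thm:autometry-algorithms} for odd $q$; yours buys a uniform treatment of all $q$ via the adjoint space, without needing Claim~\ref{claim:merge} or the isometry machinery, at the cost of not simply citing the trilinear proof.

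There is one gap in your running-time argument. You assert that $\Adj(\vec M^f,\vec M^{g_1})$ has at most $q^\ell$ elements, but this mixed adjoint space depends on $g_1$, which is built from the arbitrary input $g$; for instance if some $T_1$ makes all $M_i^{g_1}=0$, then the condition $A M_i^{g_1}=M_i^f B$ forces only $B=0$ (since $M_1^f$ is generically invertible) while $A$ is free, giving size $q^{\ell^2}$. What is true is this: if the space contains an invertible diagonal pair $(R_0,R_0)$, then $(A,B)\mapsto(AR_0^{-1},BR_0^{-1})$ is a bijection onto $\Adj(\vec M^f)$, so its size equals $|\Adj(\vec M^f)|\le q^\ell$ for generic $f$. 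Hence amend the algorithm: first verify $\dim\Adj(\vec M^f)\le\ell$ once (else report that $f$ is non-generic); then for each $T_1$, compute a basis of $\Adj(\vec M^f,\vec M^{g_1})$, and if its dimension exceeds $\ell$ conclude that no invertible conjugating $R$ exists and skip to the next $T_1$. With this fix your argument goes through.
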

\begin{proof}
Suppose we have two algebras $*, \cdot:\F_q^n\times\F_q^n\to\F_q^n$, 
represented by their structure constants. The proof strategy of 
Theorem~\ref{prop:trilinear} carries out to test algebra isomorphism in a 
straightforward fashion. The only difference is in the verification step (i.e. 
Line~\ref{enum:verify}). More specifically, we can write an algebra as an element 
in $(\F_q^n)^* \otimes \F_q^n\otimes \F_q^n$, where $(\F_q^n)^*$ is the 
dual space of $\F_q^n$.\footnote{Note that here we put $(\F_q^n)^*$ as the first 
argument, 
instead of the last one, in order to be consistent with the procedure in 
Proposition~\ref{prop:trilinear}. This is without loss of generality due to the 
standard isomorphism between $U\otimes V\otimes W$ and $W\otimes U\otimes V$.} It 
follows that we can write $*$ as 
$\sum_{i,j,k\in[n]}\alpha_{i,j,k}e_i^*\otimes e_j\otimes e_k$.
The key difference with trilinear form equivalence is that for \AI, $T\in \GL(n, 
q)$ 
acts on $e_i^*$ by its inverse. So the algorithm for \AI is the same as the one 
for trilinear form equivalence, except that in the verification step we need to 
use $R^{-1}$ instead of $R$ to act on the first argument. 
\end{proof}

\section{Complexity of symmetric and alternating trilinear form 
equivalence}\label{sec:complexity}

As mentioned in Section~\ref{subsec:tech}, the proof of 
Theorem~\ref{thm:complexity} follows by showing that symmetric and alternating 
trilinear form equivalence are \TI-hard (recall Definition~\ref{def:TI}). In the 
following we focus on the 
alternating case. The symmetric case can be tackled in a straightforward way, by 
starting from the \TI-complete problem, symmetric matrix tuple pseudo-isometry,  
from 
\cite[Theorem B]{GQ_arxiv}, and 
modifying the alternating gadget to a symmetric one.

\begin{proposition}\label{thm:alt_TI}
The alternating trilinear form equivalence problem is \TI-hard.
\end{proposition}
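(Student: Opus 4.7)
The plan is to reduce from \AltMatSpIsomlong, which is \TI-complete by \cite{FGS19,GQ_arxiv}, to alternating trilinear form equivalence. Given alternating matrix tuples $\vA, \vB \in \Lambda(n,\F)^m$, the goal is to construct in polynomial time alternating trilinear forms $\phi_\vA, \phi_\vB$ on $\F^N$, with $N = O(n+m)$, such that $\vA$ and $\vB$ are isometric if and only if $\phi_\vA$ and $\phi_\vB$ are equivalent.

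For the construction, first view $\vA$ as a $3$-way array $\tA \in \M(n \times n \times m, \F)$ whose frontal slices are alternating. Set $U = V \oplus W \oplus G$, where $V = \F^n$ is the space on which $P$ acts in an isometry, $W = \F^m$ is the space on which $R$ acts by basis change, and $G$ is an auxiliary gadget space. Since each slice of $\tA$ is alternating, $\tA$ lies naturally in the summand $\Lambda^2 V \otimes W$ of $\Lambda^3 U$; embedding it there, expressed at the level of $N \times N \times N$ arrays, amounts to placing (suitably signed) copies of $\tA$ into the three slot-positions $(V,V,W)$, $(V,W,V)$ and $(W,V,V)$ of the trilinear form. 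These are the ``three copies of $\tA$ assembled in an appropriate manner'' referred to in Section~\ref{subsec:tech}. On top of this we superimpose gadget tensors in all three tensor directions, each itself alternating and designed to rigidify one of the blocks $V$, $W$, $G$; these are adapted from the two-directional gadgets of \cite{FGS19,GQ_arxiv}.

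The forward direction is immediate: an isometry $(P,R) \in \GL(n,\F) \times \GL(m,\F)$ from $\vA$ to $\vB$ lifts to the block-diagonal equivalence $T = \diag(P, R, I_G)$ from $\phi_\vA$ to $\phi_\vB$. The main obstacle---and the heart of the proof---is the backward direction: we must show that every trilinear equivalence $T \in \GL(N,\F)$ between $\phi_\vA$ and $\phi_\vB$ is forced by the gadgets to respect the decomposition $V \oplus W \oplus G$, and moreover that its restriction to $V \oplus W$ is block-diagonal, so that an isometry $(P,R)$ can be read off. The delicate point is that the trilinear form action applies the \emph{same} $T$ to all three arguments, unlike the independent actions exploited in \cite{FGS19,GQ_arxiv}; this constrains which gadgets are admissible and is precisely why gadgets must be placed in three directions rather than two. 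Making this rigorous will require a careful invariant-based analysis of how the gadgets interact with the three embedded copies of $\tA$, and will involve the ``tricky twists'' foreshadowed in Section~\ref{subsec:tech}.
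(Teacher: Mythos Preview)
Your high-level plan---reduce from alternating matrix tuple pseudo-isometry, embed $\tA$ alternately into $\Lambda^2 V \otimes W$ via three signed copies, and attach a gadget---matches the paper's approach. But three concrete pieces of the plan are wrong as stated and would not go through without revision.

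First, the forward direction: you assert that $T = \diag(P, R, I_G)$ is an equivalence. This fails for the paper's gadget (and for any gadget that does its job). The gadget $\tG$ has its frontal-slice direction sitting inside the $V \oplus W$ coordinates, so the action of $\diag(P,R)$ on that direction scrambles the gadget's slices. One must compensate with a nontrivial action on $G$; in the paper this is $I_{n+1} \oplus (P^{-1} \otimes I_{n+1})$, which requires $\dim G = (n+1)^2$. This also explains why your size claim $N = O(n+m)$ is unsupported: the paper needs $N = n + m + (n+1)^2$, and nothing in your outline suggests how a linear-size gadget could absorb the $P$-action.

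Second, the backward direction: you aim to prove that any equivalence $T$ must be block-diagonal on $V \oplus W \oplus G$. The paper neither proves nor needs this. It only establishes that $T$ is block \emph{lower}-triangular in the sense that the $(V,W)$ and $(V,G)$ blocks vanish (via a rank argument on frontal slices: the first $n$ slices have rank at least $2(n+1)$ from the gadget, while the remaining slices have rank at most $2n$). Once $T$ has this shape, the $V$-corner $P_{1,1}$ already sends each $A_i$ into $\langle B_1,\dots,B_m\rangle$ regardless of the off-diagonal blocks, and pseudo-isometry follows. Trying to force full block-diagonality would be both harder and unnecessary.
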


\begin{proof}

\paragraph{The starting \TI-complete problem.} We use the following \TI-complete 
problem from \cite{GQ_arxiv}. 
Let $\vA=(A_1, \dots, A_m), 
\vB=(B_1, \dots, B_m)\in \Lambda(n, \F)^m$ be two tuples of alternating matrices. 
We say that $\vA$ and $\vB$ are pseudo-isometric, if there exist $C\in\GL(n, \F)$ 
and $D=(d_{i,j})\in\GL(m, \F)$, such that for any $i\in[m]$, 
$\tr{C}(\sum_{j\in[m]}d_{i,j}A_j)C=B_i$. By \cite[Theorem B]{GQ_arxiv}, the 
alternating matrix tuple pseudo-isometry problem is \TI-complete. Without loss 
of generality, we assume that $\dim(\langle A_i\rangle)=\dim(\langle B_i\rangle)$, 
as if not, then they cannot be pseudo-isometric, and this dimension condition is 
easily checked.

An alternating trilinear form $\phi:\F^n\times \F^n\times \F^n\to \F$ naturally 
corresponds to a $3$-way array $\tA=(a_{i,j,k})\in \M(n\times n\times n, 
\F)$, where $a_{i,j,k}=\phi(e_i,e_j,e_k)$. Then $\tA$ is also alternating, i.e. 
$a_{i,j,k}=0$ if $i=j$ or $i=k$ or $j=k$, and 
$a_{i,j,k}=\sgn(\sigma)a_{\sigma(i), \sigma(j), \sigma(k)}$ for any $\sigma\in 
\S_3$. So in the following, we present a construction of an alternating 
$3$-way array from an 
alternating 
matrix tuple, 
in such a way that two alternating matrix tuples are pseudo-isometric if and only 
if the corresponding alternating trilinear forms are equivalent.

\paragraph{Constructing alternating $3$-way arrays from alternating matrix tuples.}
Given $\vA=(A_1, \dots, A_m)\in\Lambda(n, \F)^m$, 
we first build the $n \times n \times m$ tensor $\tA$ which has $A_1, \dotsc, A_m$ 
as its frontal slices.
Then we will use essentially the following construction twice in succession. 
We will give two viewpoints on this construction: one algebraic, in terms of 
trilinear forms, and another ``matricial'', in terms of 3-way arrays. Different 
readers may prefer one viewpoint over the other; our opinion is that the algebraic 
view makes it easier to verify the alternating property while the matricial view 
makes it easier to verify the reduction. We thank an anonymous review for the 
suggestion of the algebraic viewpoint.
The construction is, in some sense, the 3-tensor analogue of taking an ordinary 
matrix $A$ and building the alternating matrix $\begin{bmatrix} 0 & A \\ -\tr{A} & 
0 \end{bmatrix}$.

\emph{Notation:} Just as the transpose acts on matrices by $(\tr{A})_{i,j} = 
A_{j,i}$, for a 3-tensor $\tA$, we have six possible ``transposes'' corresponding 
to the six permutations of the three coordinates. Given $\sigma \in \S_3$, we 
write 
$\tA^\sigma$ for the 3-tensor defined by $(\tA^\sigma)_{i_1,i_2,i_3} = 
\tA_{i_{\sigma(1)}, i_{\sigma(2)}, i_{\sigma(3)}}$. 

Given a 3-way array $\tA \in \M(n \times m \times d, \F)$, we will make use of 
$\tA^{(23)}$ and $\tA^{(13)}$:
\begin{itemize}
\item $\tA^{(23)}$ is $n \times d \times m$ and has $\tA^{(23)}_{i,j,k} = 
\tA_{i,k,j}$. Equivalently, the $k$-th frontal slice of $\tA^{(23)}$ is the $k$-th 
vertical slice of $\tA$.

\item $\tA^{(13)}$ is $d \times m \times n$ and has $\tA^{(13)}_{i,j,k} = 
\tA_{k,j,i}$. Equivalently, the $k$-th frontal slice of $\tA^{(13)}$ is the 
transpose of the $k$-th horizontal slice of $\tA$.

%
%
%

\end{itemize}

\begin{example}[Running example]\label{ex:running}
Let us examine a simple example as follows. Let $\vA=(A)\in \Lambda(2, \F)^1$, 
where $A=\begin{bmatrix}
0 & a \\
-a & 0 
\end{bmatrix}$. Then $\tA=(A)$; $\tA^{(23)}=(A_1', A_2')\in \M(2\times 1\times 
2, \F)$, where 
$A_1'=\begin{bmatrix}
0 \\
-a
\end{bmatrix}$, and $A_2'=\begin{bmatrix}
a \\
0
\end{bmatrix}$; $\tA^{(13)}=(A_1'', A_2'')\in\M(1\times 2\times 2, \F)$, where 
$A_1''=\begin{bmatrix}
0 & a\\
\end{bmatrix}$, and $A_2''=\begin{bmatrix}
-a & 0 
\end{bmatrix}$.
\end{example}

From the above $\tA$, $\tA^{(23)}$, and $\tA^{(13)}$, we construct $\tilde\tA\in 
\M((n+m)\times (n+m)\times (n+m), \F)$ as follows. 
We divide $\tilde\tA$ 
into 
the following eight blocks. That is, set $\tilde\tA=(\tilde\tA_1, \tilde\tA_2)$ 
(two block frontal slices) 
where 
 $\tilde\tA_1=\begin{bmatrix}
0_{n \times n \times n} & \tA^{(23)} \\
\tA^{(13)} & 0 
\end{bmatrix}$, and $\tilde\tA_2=\begin{bmatrix}
-\tA & 0 \\
0 & 0_{m \times m \times m}
\end{bmatrix}$, where $0_{n \times n \times n}$ indicates the $n \times n \times 
n$ zero tensor, and analogously for $0_{m \times m \times m}$ (the remaining sizes 
can be determined from these and the fact that $\tA$ is $n \times n \times m$).

The corresponding construction on trilinear forms is as follows. The original 
trilinear form is $A(x,y,z)  = \sum_{i,j \in [n],k \in [m]} a_{i,j,k} x_i y_j 
z_k$, where $x = (x_1, \dotsc, x_n)$, $y = (y_1, \dotsc, y_n)$, and $z=(z_1, 
\dotsc, z_m)$, and we have $A(x,y,z) = -A(y,x,z)$. The new trilinear form will be 
$\tilde{A}(x', y', z')$, where 
\begin{eqnarray*}
x'=(x^{(1)}, x^{(2)}) & = & (x^{(1)}_1, \dotsc, x^{(1)}_n, x^{(2)}_1, \dotsc, 
x^{(2)}_m) \\
y' = (y^{(1)}, y^{(2)}) & = & (y^{(1)}_1, \dotsc, y^{(1)}_n, y^{(2)}_1, \dotsc, 
y^{(2)}_m) \\
z' = (z^{(1)}, z^{(2)}) & = & (z^{(1)}_1, \dotsc, z^{(1)}_n, z^{(2)}_1, \dotsc, 
z^{(2)}_m). 
\end{eqnarray*}
This new form will satisfy $\tilde{A}(x', y', z') = \sum_{i,j,k \in [n+m]} 
\tilde{a}_{i,j,k} x'_i y'_j z'_k$. Let us unravel what this looks like from the 
above description of $\tilde\tA$. We have
\begin{eqnarray*}
\tilde{A}(x',y',z') & = & \sum_{i \in [n], j \in [m], k \in [n]} 
(\tilde\tA_1)_{i,n+j,k} x'_i y'_{n+j} z'_k + \sum_{i \in [m], j,k \in [n]} 
(\tilde\tA_1)_{n+i, j, k} x'_{n+i} y'_j z'_k \\
&&\quad + \sum_{i,j \in [n], k \in [m]} (\tilde\tA_2)_{i,j,k} x'_i y'_j z'_{n+k} \\
& = & \sum_{i \in [n], j \in [m], k \in [n]} \tA^{(23)}_{i,j,k} x'_i y'_{n+j} z'_k 
+ \sum_{i \in [m], j,k \in [n]} \tA^{(13)}_{i, j, k} x'_{n+i} y'_j z'_k - 
\sum_{i,j \in [n], k \in [m]} \tA_{i,j,k} x'_i y'_j z'_{n+k} \\
 & = & \sum_{i \in [n], j \in [m], k \in [n]} \tA_{i,k,j} x'_i y'_{n+j} z'_k + 
 \sum_{i \in [m], j,k \in [n]} \tA_{k, j, i} x'_{n+i} y'_j z'_k - \sum_{i,j \in 
 [n], k \in [m]} \tA_{i,j,k} x'_i y'_j z'_{n+k} \\
  & = & A(x^{(1)}, z^{(1)}, y^{(2)})  + A(z^{(1)}, y^{(1)}, x^{(2)}) - A(x^{(1)}, 
  y^{(1)}, z^{(2)})
\end{eqnarray*}
From this formula, and the fact that $A(x,y,z) = -A(y,x,z)$, we can now more 
easily verify that $\tilde{A}$ is alternating in all three arguments. Since the 
permutations $(13)$ and $(23)$ generate $S_3$, it suffices to verify it for these 
two. We have
\begin{eqnarray*}
\tilde{A}^{(13)}(x',y',z') & = & \tilde{A}(z', y', x') \\
& = & A(z^{(1)}, x^{(1)}, y^{(2)})  + A(x^{(1)}, y^{(1)}, z^{(2)}) - A(z^{(1)}, 
y^{(1)}, x^{(2)}) \\
& = & -A(x^{(1)}, z^{(1)}, y^{(2)}) + A(x^{(1)}, y^{(1)}, z^{(2)}) - A(z^{(1)}, 
y^{(1)}, x^{(2)}) \\
& = & -\tilde{A}(x',y',z').
\end{eqnarray*}
Similarly, we have:
\begin{eqnarray*}
\tilde{A}^{(23)}(x',y',z') & = & \tilde{A}(x', z', y') \\
 & = & A(x^{(1)}, y^{(1)}, z^{(2)})  + A(y^{(1)}, z^{(1)}, x^{(2)}) - A(x^{(1)}, 
 z^{(1)}, y^{(2)}) \\
 & = & A(x^{(1)}, y^{(1)}, z^{(2)})  - A(z^{(1)}, y^{(1)}, x^{(2)}) - A(x^{(1)}, 
 z^{(1)}, y^{(2)}) \\
 & = & -\tilde{A}(x',y',z'),
\end{eqnarray*}
as claimed.

%


\begin{example}[Running example, continued from Example~\ref{ex:running}] 
\label{ex:running2}
We can write out $\tilde\tA$ in this case explicitly. The first block frontal 
slice $\tilde\tA_1$ is $3 \times 3 \times 2$, consisting of the two frontal slices
\[
\left(\begin{array}{cc;{2pt/2pt}c}
0 & 0 & 0\\
0 & 0 & -a\\ \hdashline[2pt/2pt]
0 & a & 0 
\end{array}\right)
\text{ and } 
\left(\begin{array}{cc;{2pt/2pt}c}
0 & 0 & a\\
0 & 0 & 0 \\ \hdashline[2pt/2pt]
-a & 0 & 0 
\end{array}\right)
\]
while the second block frontal slice $\tilde\tA_2$ is the $3 \times 3 \times 1$ 
matrix
\[
\left(\begin{array}{cc;{2pt/2pt}c}
0 & -a & 0 \\
a & 0 & 0 \\ \hdashline[2pt/2pt]
0 & 0 & 0 
\end{array}\right)
\]
It can be verified easily that $\tilde\tA=(a_{i,j,k})$ is 
alternating: the nonzero entries are $a_{2,3,1}=-a$, $a_{3,2,1}=a$, $a_{1,3,2}=a$, 
$a_{3,1,2}=-a$, $a_{1,2,3}=-a$, and $a_{2,1,3}=a$, which are consistent with the 
signs of the permutations.
\end{example}

\paragraph{The gadget construction.} We now describe the gadget construction. The 
gadget can be described as a block 
$3$-way array as follows. 
Construct a $3$-way array $\tG$ of size 
$(n+1)^2\times (n+1)^2\times (n+m)$ over $\F$ as follows. For $i\in[n]$, 
the $i$th frontal slice of $\tG$ is 
$$\begin{bmatrix}
0 & 0 & \dots & 0 & I_{n+1} & 0 & \dots & 0 \\
0 & 0 &  \dots & 0 & 0 & 0 & \dots & 0 \\
\vdots & \vdots &  \dots & \vdots & \vdots & \vdots & \dots & \vdots \\
0 & 0 &  \dots & 0 & 0 & 0 & \dots & 0 \\
-I_{n+1} & 0 &  \dots & 0 & 0 & 0 & \dots & 0 \\
0 & 0 &  \dots & 0 & 0 & 0 & \dots & 0 \\
\vdots& \vdots & \dots & \vdots & \vdots & \vdots & \dots & \vdots \\
0 & 0 &  \dots & 0 & 0 & 0 & \dots & 0 
\end{bmatrix},$$
where $0$ here denotes the $(n+1)\times (n+1)$ all-zero 
matrix, $I_{n+1}$ is at the $(1, i+1)$th block position, and $-I_{n+1}$ is at the 
$(i+1, 1)$th block position. 
For $n+1\leq i\leq n+m$, the $i$th frontal slice of $\tG$ is the all-zero matrix. 
We also need the following $3$-way arrays derived from $\tG$. 
We will use 
$\tG^{(13)}$ 
and $\tG^{(23)}$. 
Note that $\tG^{(13)}$ is
of size $(n+m)\times (n+1)^2\times 
(n+1)^2$, and its $i$th horizontal slice is the $i$th frontal 
slice of $\tG$. Similarly, $\tG^{(23)}$ is of size $ 
(n+1)^2\times (n+m)\times (n+1)^2$, 
and its $j$th vertical slice is the $j$th frontal slice of $\tG$.

Finally, construct a $3$-tensor $\hat\tA$ as follows. 
It consists of the two block frontal slices
\[
\begin{bmatrix}
\tilde \tA & 0 \\
0 & -\tG
\end{bmatrix}
\text{ and }
\begin{bmatrix}
0 & \tG^{(13)} \\
\tG^{(23)} & 0\\
\end{bmatrix}.
\]

To see how this all fits together, let $\tG_1$ be the $(n+1)^2 \times (n+1)^2 
\times n$ tensor consisting of the first $n$ frontal slices of $\tG$ (these are 
the only nonzero frontal slices of $\tG$). Then we may view $\hat\tA$ as having 
three block frontal slices, namely:
\[
\begin{bmatrix}
0_{n \times n \times n} & \tA^{(23)} & 0\\
\tA^{(13)} & 0_{m \times m \times n} & 0 \\
0 & 0 & -\tG_1
\end{bmatrix}, 
\begin{bmatrix}
-\tA & 0 & 0\\
0 & 0_{m \times m \times m} & 0 \\
0 & 0 & 0_{(n+1)^2 \times (n+1)^2 \times m}
\end{bmatrix},
\]
and 
\[
\begin{bmatrix}
0_{n \times n \times (n+1)^2} & 0 & \tG_1^{(13)} \\
0 & 0_{m \times m \times (n+1)^2} & 0 \\
\tG_1^{(23)} & 0 & 0\\
\end{bmatrix}.
\]

We claim that $\hat\tA$ is alternating. To verify this is straightforward but 
somewhat tedious. So we use the following example from which a complete proof can 
be extracted easily. 

\begin{example}[Running example, continued from Example~\ref{ex:running2}]
Let $\tA$ be the $2 \times 2 \times 1$ tensor with alternating frontal slice $A = 
\begin{bmatrix} 0 & a \\ -a & 0 \end{bmatrix}$. In particular, $n=2, m=1$, so 
$\tG$ will have size $(n+1)^2 \times (n+1)^2 \times (n+m) = 9 \times 9 \times 3$, 
and $\tA$ will have size $n+m+(n+1)^2 = 12$ in all three directions. We will write 
out the first $n+m=3$ frontal slices explicitly, as those are the only ones 
involving $\tA$, and leave the last 9 (involving only transposes of $\tG_1$) 
unwritten.
\[
\left(\begin{array}{cc;{2pt/2pt}c;{2pt/2pt}ccc}
0 & 0 & 0\\
0 & 0 & -a\\ \hdashline[2pt/2pt]
0 & a & 0 \\ \hdashline[2pt/2pt]
 &  &  & 0_3 & I_3 & 0 \\
   & & & -I_3 & 0_3 & 0 \\
   & & & 0 & 0 & 0_3
\end{array}\right), 
\left(\begin{array}{cc;{2pt/2pt}c;{2pt/2pt}ccc}
0 & 0 & a\\
0 & 0 & 0\\ \hdashline[2pt/2pt]
-a & 0 & 0 \\ \hdashline[2pt/2pt]
 &  &  & 0_3 & 0 & I_3 \\
   & & & 0 & 0_3 & 0 \\
   & & & -I_3 & 0 & 0_3
\end{array}\right),
\]
\[
\text{ and } \left(\begin{array}{cc;{2pt/2pt}c;{2pt/2pt}ccc}
0 & a & 0\\
-a & 0 & 0\\ \hdashline[2pt/2pt]
0 & 0 & 0 \\ \hdashline[2pt/2pt]
 &  &  & 0_3 & 0 & 0 \\
   & & & 0 & 0_3 & 0 \\
   & & & 0 & 0 & 0_3
\end{array}\right)
\]
and the remaining 9 frontal slices look like
\[
\left(\begin{array}{cc;{2pt/2pt}c;{2pt/2pt}ccc}
0 & 0 & 0 & \\
0 & 0 & 0 & \multicolumn{3}{c}{\tG_1^{(13)}}\\ \hdashline[2pt/2pt]
0 & 0 & 0 & \multicolumn{3}{c}{0_{1 \times 9 \times 9}} \\ \hdashline[2pt/2pt]
 &  &  & 0_{3 \times 3 \times 9} & 0 & 0 \\
 \multicolumn{2}{c}{\tG_1^{(23)}} & 0_{9 \times 1 \times 9} & 0 & 0_{3 \times 3 
 \times 9} & 0 \\
   & & & 0 & 0 & 0_{3 \times 3 \times9}
\end{array}\right)
\]
Since the $a$'s only appear in positions with the same indices as they did in 
$\tilde\tA$ (see Example~\ref{ex:running2}), that portion is still alternating. 
For the $\tG$ parts, note that the identity matrices in the first three frontal 
slices, when having their indices transposed, end up either in the $\tG_1^{(13)}$ 
portion or the $\tG_1^{(23)}$ portion, with appropriate signs.
\end{example}


\paragraph{Proof of correctness.} Let $\vA, \vB\in\Lambda(n, \F)^m$. Let 
$\hat\tA=(\begin{bmatrix}
\tilde\tA & 0 \\
0 & -\tG
\end{bmatrix},\begin{bmatrix}
0 & \tG^{(13)} \\
\tG^{(23)}& 0 
\end{bmatrix})$, $
\hat\tB=(\begin{bmatrix}
\tilde\tB & 0 \\
0 & -\tG
\end{bmatrix},\begin{bmatrix}
0 & \tG^{(13)} \\
\tG^{(23)}& 0 
\end{bmatrix})\in\M((n+m+(n+1)^2)\times (n+m+(n+1)^2)\times 
(n+m+(n+1)^2), \F)$ be constructed from $\vA$ and $\vB$ using the procedure 
above, respectively. 

We claim that $\vA$ and $\vB$ are pseudo-isometric if and only if $\hat\tA$ and 
$\hat\tB$ are equivalent as trilinear forms. 

\paragraph{The only if direction.} Suppose $\tr{P}\vA P=\vB^Q$ for some 
$P\in\GL(n, 
\F)$ and $Q\in\GL(m, \F)$. 

We will construct a trilinear form equivalence from $\hat\tA$ to $\hat\tB$ of the 
form  
$S=\begin{bmatrix}
P & 0 & 0 \\
0 & Q^{-1} & 0 \\
0 & 0 & R
\end{bmatrix}\in \GL(n+m+(n+1)^2, \F)$, where $R\in\GL((n+1)^2, \F)$ is to be 
determined later on. 

Recall that $\hat\tA=(\begin{bmatrix}
\tilde\tA & 0 \\
0 & -\tG
\end{bmatrix},\begin{bmatrix}
0 & \tG^{(13)} \\
\tG^{(23)}& 0 
\end{bmatrix})$, $
\hat\tB=(\begin{bmatrix}
\tilde\tB & 0 \\
0 & -\tG
\end{bmatrix},\begin{bmatrix}
0 & \tG^{(13)} \\
\tG^{(23)}& 0 
\end{bmatrix})$. It can be verified that the action of $S$ sends $\tilde\tA$ to 
$\tilde\tB$. It 
remains to show that, by choosing an appropriate $R$, the action of $S$ also sends 
$\tG$ to $\tG$.

Let $\tG_1$ 
be the first $n$ frontal slices of $\tG$, and $\tG_2$ 
the last $m$ frontal slices from $\tG$. Then the action of $S$ sends $\tG_1$ to 
$\tr{R}\tG_1^P R$, and $\tG_2$ to $\tr{R}\tG_2^{Q^{-1}}R$.
Since $\tG_2$ is all-zero, the action of $S$ on $\tG_2$ results in an all-zero 
tensor, so we have $\tr{R}\tG_2^{Q^{-1}}R=\tG_2$. 

We then turn to $\tG_1$. For $i\in[n+1]$, consider the $i$th horizontal slice of 
$\tG_1$, which is 
of the form $H_i=\begin{bmatrix}
0 & B_{1, i} & B_{2, i} & \dots & B_{n, i}
\end{bmatrix}$, where $0$ denotes the $n\times (n+1)$ all-zero matrix, and 
$B_{j, i}$ is the $n\times (n+1)$ elementary matrix with the 
$(j, i)$th
entry being $1$, and other entries being $0$. Note that those non-zero entries of 
$H_i$ are 
in the $(k(n+1)+i)$th columns, for $k\in[n]$. Let $\tr{P}=\begin{bmatrix}
p_1 & \dots & p_n
\end{bmatrix}$, where $p_i$ is the $i$th column of $\tr{P}$. Then $P$ acts on 
$H_i$ from the 
left, which yields $\tr{P}H_i=\begin{bmatrix}
0 & P_{1, i} & \dots & P_{n, i}
\end{bmatrix}$, where $P_{j, i}$ denotes the $n\times (n+1)$ matrix with the $i$th 
column being 
$p_j$, and the other columns being $0$. 

Let us first set $R=\begin{bmatrix}
I_{n+1} & 0 \\
0 & \hat{R}
\end{bmatrix}$, where $\hat{R}$ is to be determined later on. Then the left action 
of $R$ on $\tG_1$ preserves $H_i$ through 
$I_{n+1}$. The right action of $R$ on $\tG_1$ translates to the right action of 
$\hat{R}$ on $H_i$. To send $\tr{P}H_i$ back to $H_i$, $\hat{R}$ needs to
act on 
those $(k(n+1)+i)$th columns of $H_i$, $i\in[n+1]$, as $P^{-1}$. Note that for 
$H_i$ 
and 
$H_j$, $i\neq j$, those columns with non-zero entries are disjoint. This gives 
$\hat{R}$ the freedom to handle different $H_i$'s separately. In other words, 
$\hat{R}$ can be set as $P^{-1}\otimes I_{n+1}$. This ensures that for every 
$H_i$, 
$\tr{P}H_i\hat{R}=H_i$. To summarize, we have $\tr{R}\tG_1^P R=\tG_1$, and this 
concludes the proof for the only if direction. 

\paragraph{The if direction.} Suppose $\hat\tA$ and $\hat\tB$ are isomorphic as 
trilinear forms via $P\in\GL(n+m+(n+1)^2, \F)$. Set $P=\begin{bmatrix}
P_{1,1} & P_{1,2} & P_{1,3}\\
P_{2,1} & P_{2,2} & P_{2,3}\\
P_{3,1} & P_{3,2} & P_{3,3}
\end{bmatrix}$, where $P_{1,1}$ is of size $n\times n$, $P_{2,2}$ is of size 
$m\times m$, and $P_{3,3}$ is of size $(n+1)^2\times (n+1)^2$. 
Consider the ranks of the frontal slices of $\hat{\tA}$.
\begin{itemize}
\item The ranks of the first $n$ frontal slices are in $[2(n+1), 4n]$. This is 
because a frontal slice in this range consists of two copies of vertical slices 
of $\tA$ (whose ranks are between $[0, n-1]$ due to the alternating condition), 
and one frontal slice of $\tG$ (whose ranks are of $2(n+1)$).
\item The ranks of the $n+1$ to $n+m$ frontal slices are in $[0, 
n]$. This is because a frontal slice in this range consists of only just one 
frontal slice of $\tA$.
\item The ranks of the last $n(n+1)$ vertical slices are in $[0, 2n]$. This is 
because a frontal slice in this range consists of two copies of horizontal slices 
of $\tG$ (whose ranks are either $n$ or $1$; see e.g. the form of $H_i$ in the 
proof of the only if direction).
\end{itemize}
By the discussions above, we claim that that $P$ must be of the form 
$\begin{bmatrix}
P_{1,1} & 0 & 0\\
P_{2,1} & P_{2,2} & P_{2,3}\\
P_{3,1} & P_{3,2} & P_{3,3}
\end{bmatrix}$. To see this, for the sake of contradiction, suppose there are 
non-zero entries in $P_{1,2}$ or 
$P_{1,3}$. Then a non-trivial linear combination of the first $n$ frontal slices 
is added to one of the last $(m+(n+1)^2)$ frontal slices. This implies that for 
this slice, the lower-right $(n+1)^2\times (n+1)^2$ submatrix is of the form 
$\begin{bmatrix}
0 & a_1 I_{n+1} & a_2 I_{n+1} & \dots & a_nI_{n+1} \\
-a_1 I_{n+1} & 0 &  0 &  \dots& 0  \\
-a_2 I_{n+1} & 0 & 0 &  \dots  & 0 \\
\vdots & \vdots & \vdots & \ddots  & \vdots\\
-a_n I_{n+1} & 0& 0 &  \dots  & 0 
\end{bmatrix}$, where one of $a_i\in \F$ is non-zero. Then this slice is of rank 
$\geq 2(n+1)$, which is unchanged by left (resp. right) multiplying $\tr{P}$ 
(resp. $P$), so it cannot be equal to the corresponding slice of $\hat\tB$ 
which is of rank $\leq 2n$. We then arrived at the desired contradiction.

Now consider the action of such $P$ on the $n+1$ to $n+m$ frontal slices. Note 
that these slices are of the form $\begin{bmatrix}
A_i & 0 & 0 \\
0 & 0 & 0 \\
0 & 0 & 0
\end{bmatrix}$. (Recall that the last $m$ slices of $\tG$ are all-zero matrices.) 
Then we have 
$
\begin{bmatrix}
\tr{P_{1,1}} & \tr{P_{2,1}} & \tr{P_{3,1}}\\
0 & \tr{P_{2,2}} & \tr{P_{3,2}}\\
0 & \tr{P_{2,3}} & \tr{P_{3,3}}
\end{bmatrix}
\begin{bmatrix}
A_i & 0 & 0 \\
0 & 0 & 0 \\
0 & 0 & 0
\end{bmatrix}
\begin{bmatrix}
P_{1,1} & 0 & 0\\
P_{2,1} & P_{2,2} & P_{2,3}\\
P_{3,1} & P_{3,2} & P_{3,3}
\end{bmatrix}=
\begin{bmatrix}
\tr{P_{1,1}}A_iP_{1,1} & 0 & 0 \\
0 & 0 & 0 \\
0 & 0 & 0
\end{bmatrix}.$
Since $\tr{P}\hat\tA^P P=\hat\tB$, we have $\tr{P}\hat\tA P=\hat\tB^{P^{-1}}$. 
Observe that for the upper-left $n\times n$ submatrices of the frontal slices of 
$\hat\tB$, $P^{-1}$ simply performs a linear combination of $B_i$'s. 
It 
follows that every $\tr{P_{1,1}}A_iP_{1,1}$ is in the linear span of $B_i$. Since 
we assumed 
$\dim(\langle 
A_i\rangle)=\dim(\langle B_i\rangle)$, we have that $\vA$ 
and $\vB$ are pseudo-isometric. 
This concludes the proof of Proposition~\ref{thm:alt_TI}.
\end{proof}

\section*{Acknowledgment}
  \noindent The authors thank the anonymous reviewers for their careful reading 
  and  helpful suggestions. 


\end{document}